\newtheorem{proposition}{Proposition}
\newtheorem{lemma}{Lemma}
\newtheorem{remark}{Remark}
\begin{document}
\title{Analytical Modeling of Interference Aware Power Control for the Uplink of Heterogeneous Cellular Networks}
%

\author{Francisco~J.~Martin-Vega, Gerardo~Gomez, 
        Mari~Carmen~Aguayo-Torres and Marco~Di~Renzo
\thanks{This work has been submitted to the IEEE for possible publication. Copyright may be transferred without notice, after which this version may no longer be accessible} 
\thanks{F.~J.~Martin-Vega, G. Gomez and M. C. Aguayo-Torres are with the Departamento 
de Ingenier\'ia de Comunicaciones, Universidad de M\'alaga, M\'alaga 29071, Spain (e-mail: fjmvega@ic.uma.es, ggomez@ic.uma.es, aguayo@ic.uma.es)}
\thanks{Marco Di Renzo is with Laboratoire des Signaux et Syst\`emes, Centre National
de la Recherche Scientifique-\'Ecole Supérieure d’\'Electricit\'e-Universit\'e ParisSud XI, 91192 Gif-sur-Yvette Cedex, France (e-mail: marco.direnzo@lss.supelec.fr).}}

\maketitle


\newif\ifOneColumn
\OneColumnfalse

\begin{abstract}
Inter-cell interference is one of the main limiting factors in current Heterogeneous Cellular Networks (HCNs). Uplink Fractional Power Control (FPC) is a well known method that aims to cope with such limiting factor as well as to save battery live. In order to do that, the path losses associated with Mobile Terminal (MT) transmissions are partially compensated so that a lower interference is leaked towards neighboring cells. Classical FPC techniques only consider a set of parameters that depends on the own MT transmission, like desired received power at the Base Station (BS) or the path loss between the MT and its serving BS, among others. Contrary to classical FPC, in this paper we use stochastic geometry to analyze a power control mechanism that keeps the interference generated by each MT under a given threshold. We also consider a maximum transmitted power and a partial compensation of the path loss. Interestingly, our analysis reveals that such Interference Aware (IA) method can reduce the average power consumption and increase the average spectral efficiency. Additionally, the variance of the interference is reduced, thus improving the performance of Adaptive Modulation and Coding (AMC) since the interference can be better estimated at the MT.
\end{abstract}

\begin{IEEEkeywords}
Interference Mitigation, Uplink, Power Control, Stochastic Geometry, Heterogeneous Networks.
\end{IEEEkeywords}

%
\IEEEpeerreviewmaketitle

\section{Introduction}
\label{sec:Introduction}
%
\IEEEPARstart{S}{ince} the beginning of cellular systems, interference has been the main limiting factor, due in part to its highly indeterministic nature and its sensitivity to network conditions. This situation is even aggravated in the uplink, since the interfering set of Mobile Terminals (MTs) depends on the scheduling decisions of other cells (that may change from one sub-frame to the following), their channel states, positions and transmission powers. Additionally, in irregular networks interfering MTs can be closer to the serving BS than its intended MT. 

In order to cope with interference, Fractional Power Control (FPC) has been included as an essential part of the uplink (UL) of Long Term Evolution (LTE) and LTE-Advanced \cite{3gpp.36.213}. %
Essentially, this technique partially compensates the path loss allowing cell-interior MTs to save battery while ensuring that cell-edge MTs do not cause excessive interference to neighboring cells \cite{Castellanos08, Mullner09, Simonsson08}. However such approach still generates an undesired level of interference that reduces the UL performance. Beside this, the highly indeterministic nature of UL interference poses additional challenges to interference estimation which degrades the performance of Adaptive Modulation and Coding (AMC) \cite{Goldsmith98}. 

To solve this issue, an Interference Aware Fractional Power Control (IAFPC), which is compliant with LTE specifications, was proposed in \cite{Zhang12}. This power control mechanism establishes a maximum interference level $i_0$ that each MT transmission can cause to the most interfered Base Station (BS). Interestingly, this method leads to a significant performance improvement regarding the variance of the interference, average rate and power consumption. There are several works that have studied IA power control methods \cite{Zhang12, Rao07, Boussif08}; however they are based on simulations with a single tier and an hexagonal grid. 
Besides simulation based studies, there is a need to obtain analytical models that allow for tractable analysis leading to a better understanding of both power control and association in the UL. In addition, analytical models allow for quick evaluation of main performance metrics and optimization.

\subsection{Related Work}
\label{sec:Related Work}

Traditionally, the association between MTs and BSs in the UL has been coupled with the association in downlink (DL) for the sake of technical reasons related to network implementation. However, it has been  recently proposed to split the UL and DL association, making the association in the UL based on a minimum path loss criterion \cite{Elshaer14}. This approach also allows to reduce UL interference since MTs are associated with the minimum path loss BSs, thus MTs will transmit with smaller power.

The variant demand of capacity across service areas and the deployment of Heterogeneous Cellular Networks (HCNs) are evolving the cellular network from a regular grid to a rather irregular infrastructure which resembles more to a random topology. In this context, stochastic geometry appears as an interesting tool that allows for a tractable analysis of cellular systems where the positions of the BSs typically follows the uniform Poison Point Process (PPP) \cite{Haenggi09,ElSawy13}.  There are several works illustrating that stochastic geometry offers lower bounds in performance which are as tight as simulation results obtained with the hexagonal grid \cite{Andrews11, Dhillon12, Guo13, Blaszczyszyn13, Xu11}. Regarding analysis, this approach normally considers the typical link between a probe MT and its serving BS, where the term \textit{typical}, means randomly chosen and involves that the MT can be placed anywhere inside the cell. 

Analysis of the UL is quite more involving than the DL; the transmission powers of the interferers are coupled with their serving BSs' distances due to power control. In addition, even if the positions of the BSs and MTs follow a PPP, the positions of the interfering MTs do not follow a PPP, making the exact analysis intractable \cite{Singh15}. The reason behind that is related to the fact that in 4G networks orthogonal resource allocation is used, and hence there will be a single interfering MT per BS. 

There are several works that analyze the performance of UL with different power control and association policies. In \cite{Novlan13} single tier networks with FPC is analysed using stochastic geometry. In order to avoid the intractability of the interfering MT positions, the proposed framework assumes that MTs scheduled in the Resource Block (RB) of interest form a Voronoi Tesselation and a single BS falls inside each Voronoi cell. In \cite{Smiljkovikj15} FPC is analyzed approximating the positions of the interfering MTs as an uniform PPP in the entire plane; however, with this approach, some interfering MTs might experience a lower path loss with the probe BS than with its serving BS, which is not a realistic situation. Recently \cite{ElSawy14, Singh15} propose accurate frameworks to model the interfering MT positions. 
Both works consider the correlation between the probe BS and the interfering MT positions. Such correlation involves that interfering MTs cannot be placed in positions that would result associated with the probe BS. 
The authors of \cite{ElSawy14} consider a truncated channel inversion power control, where MTs tries to fully compensate their path loss provided that they do not have to transmit with more power than $p_\mathrm{max}$. Those MTs requiring a higher power than $p_\mathrm{max}$ are kept silent and association is based on min path loss. A generalized weighted association for both coupled and decoupled access is considered in \cite{Singh15} so as to analyze FPC. Then the joint rate in DL/UL is obtained to assess the trade-off between DL and UL performance. 

\subsection{Contributions}
\label{sec:Contributions}
In this paper, a novel framework for the modeling and analysis of IAFPC is presented. This model is appealing since non IA power control methods can be viewed as a particular case when the maximum allowed interference level $i_0$ tends to infinity. 
The proposed framework is based on a \textit{conditional thinning} in order to appropriately model the positions of the interfering MTs, adding a necessary correlation with the probe BS position. However, this case is more involved than the case of previously studied FPC since the correlation with the most interfered BS also needs to be considered. In addition, it is necessary to deal with non linear functions that depends on both distance towards the serving BS and the distance towards the most interfered BS. All these issues make that final expressions for the distribution of the Signal to Interference Plus Noise Ratio (SINR) complex to evaluate. Hence, two approximations to the interference are proposed aiming to reduce the computational complexity. These approximations are: (i) approximate the Laplace transform of the interference by a sigmoid function and then perform logistic regression \cite{Cybenko89} in order to obtain the function parameters and (ii) approximate the Laplace transform by a suitable function and then perform Moment Matching (MM). Additionally, asymptotic analysis is performed in order to identify trends as $i_0$ becomes low or tends to infinity. This latter approach also reduces the numerical complexity avoiding the need to approximate the interference term. The contributions of the paper can be summarized in the following points:
\begin{itemize}
\item Analysis and comparison of both IA and non IA FPC. From numerical comparison it is shown that interference awareness reduces both the mean and variance of the interference as well as the average transmitted power and increases the average Spectral Efficiency (SE).
\item Analysis of a wide variety of performance metrics in order to gain theoretical comprehension of the aforementioned techniques. Considered key performance metrics include average transmitted power, mean and variance of the interference, coverage probability and average spectral efficiency.
%
%
\item Two accurate approximations to the Laplace transform of the interference for the case of IAFPC are proposed. These approximations efficiently reduce the computational complexity. 
\item Asymptotic analysis reveals interesting insights from theoretical expressions. In particular it is shown that statistics of the interference are independent of the BS density in the low $i_0$ regime under minimum path loss association. 
\end{itemize}

\subsection{Paper Organization and Notations}
\label{sec:Paper Organization and Notations}
The rest of the paper is organized as follows. Section \ref{sec:System model} describes the system model and proposed approach. In Section \ref{sec:Analysis of Interference Aware Power Control} the analysis of IAFPC  is presented. Two necessary approximations to the interference for IAFPC are proposed in \ref{sec:Statistical Modeling of the interference}. Section \ref{sec:Asymptotic Analysis} presents asymptotic analysis of IAFPC yielding simpler expressions that represent the performance of non IA FPC when $i_0$ tends to $\infty$ and the performance of IAFPC in the low $i_0$ regime. Then, in Section \ref{sec:Numerical Results} the derived expressions are evaluated to obtain insights and identify trends. Finally, concluding remarks are given Section \ref{sec:Conclusion}. 

\textbf{Notation}: Through this paper $\mathbb{E}[\cdot]$ stands for the expectation operator and $\Pr(\cdot)$ for the probability measure. $\mathbf{1} (\cdot)$ is the indicator function. The first and second derivatives of $f(x)$ evaluated at $x_0$ are represented as $f'(x_0)$ and $f''(x_0)$. Random variables (RV) and events are represented with capital letters whereas lower case is reserved for deterministic values and parameters. If $X$ is a RV, $f_X(\cdot)$, $F_X(\cdot)$, $\bar{F}_X(\cdot)$ and $\mathcal{L}_X(\cdot)$ represent its probability density function (pdf), cumulative distribution function (cdf), complementary cdf (ccdf) and Laplace transform of its pdf respectively. $\Gamma(z)=\int_{0}^{\infty} t^{z-1} \mathrm{e}^{-t} \mathrm{d}t$ stands for the Euler gamma function whereas 
$_2F_1(\cdot,\cdot,\cdot,\cdot)$ is the Gauss hypergeometric function defined in \cite{Abramowitz65} (Ch. 15). Having a function $f(x,y)$ we write the limit when $x \to a$ as $f(y)^{(x \to a)} = \lim \limits_{x \to a} f(x,y)$. Finally we say that two functions $f(x)$ and $f^{(x \sim a)}(x)$ are asymptotically similar when $x \to a$ if 
$\lim \limits_{x \to a} \frac{f(x)}{f^{(x \sim a)}(x)} = 1$.

\section{System Model}
\label{sec:System model}
We consider a HCN composed of two tiers, i.e. Macro cell BSs (MBSs) and Small cell (BSs) SBSs, where the BSs of tier 
$j\in\mathcal{K}=\{1,2\} $ are spatially distributed in $\mathbb{R}^2$ according to an uniform PPP $\Phi^{(j)}=\{\mathrm{BS}^{(j)}_0, \mathrm{BS}^{(j)}_1, \cdots \}$, with density $\lambda^{(j)}$ where $\mathrm{BS}^{(j)}_i$ is the location of $i$th BS in the $j$th tier.  The positions of all BSs are represented with the PPP $\Phi=\cup_{j\in \mathcal{K}} \Phi^{(j)}$. MTs are also spatially distributed according to an uniform PPP, $\Phi_{\mathrm{MT}}=\{\mathrm{MT}_0,\mathrm{MT}_1,\cdots\}$, with density $\lambda_{\mathrm{MT}}$. It is assumed that the density of MTs is high enough to consider full loaded conditions, i.e. each BS has at least one MT to serve and all the RBs are used. The analysis is performed for the typical MT, i.e. a randomly chosen MT with $\mathrm{MT}\in\Phi_{\mathrm{MT}}$. Since uniform PPPs are translation invariant point processes, such typical point can be considered to be placed at the origin (without loss of generality) thanks to Slivnyak's theorem \cite{Haenggi13}. In this paper the typical MT is named the probe MT, which is represented as $\mathrm{MT}_0$, and its serving BS is the probe BS,  $\mathrm{BS}_0$. 

\subsection{Channel Models}
\label{sec:Channel Models}
It is considered that transmitted signal undergoes both shadowing and multi-path fading. Multi-path fading is modeled as an exponential distribution with unitary mean whereas shadowing is modeled as a Log-normal distribution with standard deviation $\sigma_s$ and mean $\mu_s$. Unitary mean squared value is assumed and hence $\mu_s= -\ln(10) \sigma_s^2/20$ \cite{DiRenzo13}. We use equivalent distances including shadowing as in \cite{Dhillon14} where  $\dot{R}_{x,y}={S}^{-1/\alpha}_{x,y} R_{x,y}$ being ${S}_{x,y}$ the shadowing between locations $x$ and $y$ and $R_{x,y}=\| x-y \|$ the Euclidean distance. It is assumed independent fading and shadowing for different locations. 
Hence shadowing can be considered as a random displacement over $\Phi^{(j)}$  \cite{Blaszczyszyn13,Haenggi13}, where the density of the displaced PPP is $\dot{\lambda}^{(j)}=\lambda^{(j)} \mathbb{E} [S^{-1/\alpha}]$ and $\mathbb{E} [S^{-1/\alpha}] = \exp \left( \frac{\ln(10) \mu_s}{5\alpha} + \frac{1}{2} \left( \frac{\ln(10) \sigma_s}{5 \alpha}\right) \right)$.  
For the sake of simplicity, from now on all the distances and PPPs are considered to include shadowing. 

Path loss models proposed in 3GPP for performance evaluation are typically formulated as 
$L(\mathrm{dB})=a_L+b_L \log_{10}(R_{x,y}(\mathrm{km}))$ where $R_{x,y}$ is the distance between locations $x$ and $y$ expressed in km and both $a_L$ and $b_L$ depends on several radio frequency parameters. However, in the literature theoretical analysis normally considers a path loss law that only depends on the path loss exponent $\alpha$ as 
$L=\left( R_{x,y}(\mathrm{m})\right)^{\alpha}$. In this work we consider a path loss law as in \cite{Blaszczyszyn13} that considers also a path loss slope $\tau$. Hence the path loss is expressed as $L=\left(\tau \cdot R_{x,y}(m)\right)^\alpha$. This approach allows to include a more realistic path loss models since $\tau=10^{(a_L-3b_L)/b_L}$ and $\alpha=b_L/10$. In particular, we have considered in Section \ref{sec:Numerical Results} the path loss model from 3GPP (\cite{3gpp.136.942}, Sec. 4.5.2) that appears below for convenience:
\ifOneColumn
\begin{equation}
\label{eq:3GPP path loss}
     a_L = 80 - 18 \log_{10}\left( h_\mathrm{BS}(\mathrm{m}) \right)
     + 21 \log_{10}\left( f_c(\mathrm{MHz}) \right); \, 
     b_L = 40 \left( 1-4 \cdot 10^{-3} h_\mathrm{BS}(\mathrm{m}) \right)   
\end{equation}
\else
\begin{equation}
\label{eq:3GPP path loss}
  \left\{ \begin{array}{lr} 
     a_L = 80 - 18 \log_{10}\left( h_\mathrm{BS}(\mathrm{m}) \right)
     + 21 \log_{10}\left( f_c(\mathrm{MHz}) \right) \\ 
     b_L = 40 \left( 1-4 \cdot 10^{-3} h_\mathrm{BS}(\mathrm{m}) \right) \end{array}  \right.
\end{equation}
\fi
\noindent 
where $f_c$ is the carrier frequency in MHz and $h_\mathrm{BS}(\mathrm{m})$ is the BS's antenna height in m. 

\subsection{Association and Scheduling}
\label{sec:Association}
The criterion of association among MTs and BSs is based on average weighted received power as in \cite{Singh15} using association weights $t^{(j)}$ for tier $j\in\mathcal{K}$. 

Let us define the event $\mathcal{X}^{(j)}_{\mathrm{MT}_i}$ as: $\mathrm{MT}_i$ \textit{is associated with tier} $j$. More formally such event can be defined as follows
%
\begin{equation}
\label{eq:Xj}
\mathcal{X}^{(j)}_{\mathrm{MT}_i} = 
	\left\{ {t^{(j)}}{\left( \tau \cdot R^{(j)}_{\mathrm{MT}_i,(1)}
	 \right)^{-\alpha}}   
	>  {t^{(\tilde{j})}}{\left( \tau \cdot R^{(\tilde{j})}_{\mathrm{MT}_i,(1)} 
	\right)^{-\alpha}}   \right\}
\end{equation}
\noindent where $\tilde{j} = \left\{x \in \mathcal{K}: x \neq j \right\}$ represents the complementary tier to tier $j$ and ${R}^{(\tilde{j})}_{x,(q)}$ is the distance from x to the $q$th nearest BS of tier $\tilde{j}$, i.e. ${R}^{(\tilde{j})}_{x,(1)}$ is the distance to the nearest BS. It can be noticed that the association weights allow us to model minimum path loss association with $t^{(j)}=1$ and association based on DL received power among others. 

Full frequency reuse is considered where all the BSs share the same bandwidth which is divided in RBs for scheduling purposes, being a single RB the minimum amount of bandwidth that can be allocated. In LTE and LTE-A each RB is divided into 12 Resource Elements (REs) of 15 kHz. 
All MTs that have been scheduled in a given resource are name \emph{active} MTs. 
The set of active interfering MTs from tier $k$ scheduled in the probe RB is identified as $\Psi^{(k)}$. The MT scheduled in each RB is selected randomly. 

\subsection{Power Control Mechanism}
\label{sec:Power Control Mechanism}
In this work we consider an IAFPC mechanism as in \cite{Zhang12} where each MT causes less interference than $i_0$ to its most interfered BS and transmits with less power than $p_\mathrm{max}$. It is assumed that power control can adapt to slow variation in received power, hence it can only compensate for path loss and shadowing. The transmit power can be expressed as follows
\ifOneColumn
\begin{align}
\label{eq:pMT Soft}
	p_\mathrm{MT} \left( R_{\mathrm{MT}_0}, U_{\mathrm{MT}_0} \right) = 
	\min  \left(  p_0 \left( \tau R_{\mathrm{MT}_0} \right)^{\alpha \epsilon}, 
	i_0 \left( \tau U_{\mathrm{MT}_0} \right)^\alpha, p_\mathrm{max} \right)
\end{align}
\else
\begin{align}
\label{eq:pMT Soft}
	p_\mathrm{MT} \left( R_{\mathrm{MT}_0}, U_{\mathrm{MT}_0} \right) = 
	\min & \left(  p_0 \left( \tau R_{\mathrm{MT}_0} \right)^{\alpha \epsilon}, \right. \nonumber \\
	& i_0 \left. \left( \tau U_{\mathrm{MT}_0} \right)^\alpha, p_\mathrm{max} \right)
\end{align}
\fi
\noindent
being $R_{\mathrm{MT}_0}$ the distance to the serving BS, $U_{\mathrm{MT}_0}$ the distance to the most interfered BS, $p_0$ is the desired received signal power at the serving BS and $\epsilon$ the partial compensation factor. If the transmit power $p_\mathrm{MT} \left( R_{\mathrm{MT}_0}, U_{\mathrm{MT}_0} \right) = p_\mathrm{max}$ we say that the transmission is truncated by $p_\mathrm{max}$ whereas we say that it is truncated by $i_0$ if $p_\mathrm{MT} \left( R_{\mathrm{MT}_0}, U_{\mathrm{MT}_0} \right) = i_0  \left( \tau U_{\mathrm{MT}_0} \right)^\alpha$.

We define the event $\mathcal{Q}^{(m)}_{\mathrm{MT}_i}$ as: \emph{the most interfered BS by MT$_i$'s transmissions belong to tier m}. Hence we can define the event $\mathcal{X}^{(j,m)}_{\mathrm{MT}_i} = \mathcal{X}^{(j)}_{\mathrm{MT}_i} \cap \mathcal{Q}^{(m)}_{\mathrm{MT}_i}$ which means: \emph{MT$_i$ is associated with tier j and the most interfered BS by MT$_i$'s transmission belong to tier m}. Mathematically this event can be expressed as
\begin{align}
\label{eq:Xjm}
	&\mathcal{X}^{(j,m)}_{\mathrm{MT}_i} = \mathcal{X}^{(j)}_{\mathrm{MT}_i} \cap 
	\overbrace{ \left\{ R^{(j)}_{\mathrm{MT}_i,(2)} > R^{(m)}_{\mathrm{MT}_i,(1)} \right\}}
	^{\mathcal{Q}^{(m)}_{\mathrm{MT}_i}},\, \mathrm{if} j \neq m \nonumber \\	
	&\mathcal{X}^{(j,j)}_{\mathrm{MT}_i} = \mathcal{X}^{(j)}_{\mathrm{MT}_i} \cap 
	\overbrace{ \left\{ R^{(j)}_{\mathrm{MT}_i,(2)} < R^{(\tilde{j})}_{\mathrm{MT}_i,(1)}
	\right\} }
	^{\mathcal{Q}^{(j)}_{\mathrm{MT}_i}} ,\, \mathrm{if} j = m 
\end{align}

\begin{figure}[t]
\centering
\includegraphics[width=3in]{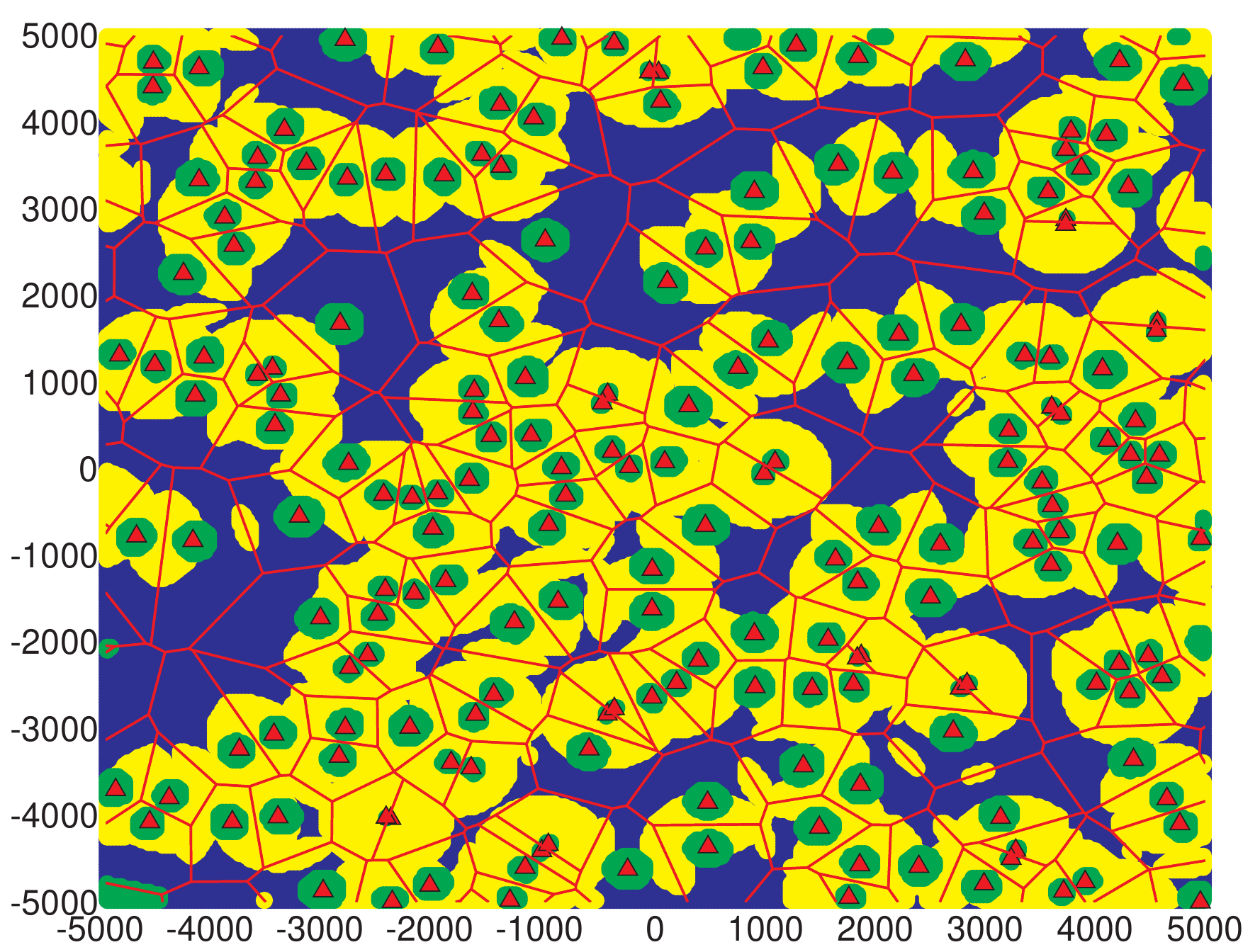}
\caption{Homogeneous network realization in $[-5000, 5000]^2$ m$^2$ showing with green color positions related to non truncated MTs. Yellow and blue colors are associated with MTs truncated by $i_0$ and $p_\mathrm{max}$ respectively. The simulation parameters are $\lambda^{(1)}=2$ BS/km$^2$, $i_0=-90$ dBm, $p_\mathrm{max} = 30$ dBm, $p_0=-70$ dBm and $\epsilon=1$.}
\label{fig:Fig1}
\end{figure}

It can be observed from Fig. \ref{fig:Fig1} that the locations of non truncated MTs are different for each BS. This is due to the fact that such locations also depend on the positions of neighboring BSs. It can be noticed that MTs truncated by $i_0$ tend to be placed in positions close to a victim BS whereas MTs truncated by $p_\mathrm{max}$ tend to be placed far from their serving BSs but also far from any other BSs. Intuitively this mechanism allows to reach a balance between interference and desired received power that increases the SINR in statistical terms. It is important to note that MTs that tend to cause more interference than $i_0$ or try to transmit with more power than $p_\mathrm{max}$ are not keep silent; instead they transmit at maximum power that do not violate the aforementioned conditions. 

\subsection{SINR}
\label{sec:SINR}
The SINR for the typical MT$_0$ can be expressed as
\begin{equation}
\label{eq:SINR}
	\mathrm{SINR}_{\mathrm{MT}_0} = \frac{ H_{\mathrm{MT}_0} \left(\tau R_{\mathrm{MT}_0} \right)^{-\alpha}
	P_{\mathrm{MT}_0} }{I_\mathrm{exact} + \sigma_n^2 }
\end{equation}
being $H_{\mathrm{MT}_0}$ the multi-path fading for the desired link, $R_{\mathrm{MT}_0}$ the distance to the serving BS including shadowing, $P_{\mathrm{MT}_0}$ the transmit power, $I_\mathrm{exact}$ the aggregate interference and $\sigma_n^2$ the noise power. 
As it has been mentioned in the introduction, in the UL the interfering MT does not follow a PPP even when the positions of BSs and MTs are PPPs. The nature of such PP which depends on the scheduling and association criteria makes the analysis intractable, so the PPP assumption of interfering MT locations seems to be appealing.  
However, it is necessary to add appropriate correlation between the probe BS and the interfering MTs' locations to improve the accuracy of the analysis \cite{ElSawy14, Singh15}. Note that in the UL an interfering MT can be placed closer to the probe BS than the probe MT. 
Nevertheless, the interfering MT always has higher weighted received power from its serving BS than from the probe BS thanks to the association criteria. 
We use this fact, which has been previously used in \cite{ElSawy14} and \cite{Singh15}, to model the interfering MTs' locations. To do that we perform a conditional thinning over the event $\mathcal{O}^{(j,k)}_{\mathrm{MT}_i}$ defined as: \emph{the interfering MT$_i$ belonging to tier $k$ receives higher weighted power from its serving BS than from the probe BS which belongs to tier $j$}. Mathematically we have
\begin{equation}
\label{eq:O_MTi}
	\mathcal{O}^{(j,k)}_{\mathrm{MT}_i} = \left\{ t^{(k)} 
	\left( \tau R_{\mathrm{MT}_i} \right)^{-\alpha} > 
	t^{(j)} \left( \tau D_{\mathrm{MT}_i} \right)^{-\alpha} \right\}
\end{equation}
being $R_{\mathrm{MT}_i}$ the distance between the interfering MT$_i$ and its serving BS whereas $D_{\mathrm{MT}_i}$ is the distance between MT$_i$ and the probe BS. 

Additionally, interference awareness involves a different kind of correlation between MT$_i$ and the probe BS position since MT$_i$'s transmission cannot cause higher interference level than $i_0$. This correlation is also added by means of a dependent thinning with the event $\mathcal{Z}_{\mathrm{MT}_i}$ which is defined as: \emph{the interfering MT$_i$ causes less interference to $\mathrm{BS}_0$ than $i_0$}. More formally we have
\begin{equation}
\label{eq:ZMTi}
\mathcal{Z}_{\mathrm{MT}_i} = 
    \left\{ P_{\mathrm{MT}_i} \left( \tau D_{\mathrm{MT}_i} \right)^{-\alpha} 
     < i_0 \right\}    
\end{equation}
%
Finally, the exact interference term $I_\mathrm{exact}$, which is intractable, is approximated as $I_\mathrm{exact} \simeq I$. The analytical interference term $I$ for a probe MT associated to tier $j$ appears below
\begin{equation}
\label{eq:I}
	I = \sum_{k \in \mathcal{K}} \sum_{\mathrm{MT}_i \in \Psi^{(k)}} 
	H_{\mathrm{MT}_i}  \left(\tau D_{\mathrm{MT}_i} \right)^{-\alpha} P_{\mathrm{MT}_i} 
	\mathbf{1} \left( \mathcal{O}^{(j,k)}_{\mathrm{MT}_i} \right)
	\mathbf{1} \left( \mathcal{Z}_{\mathrm{MT}_i} \right)
\end{equation}
where $\Psi^{(k)}$ represents a PPP with the locations of interfering MTs scheduled in the RB of interest, so its density is $\lambda^{(k)}$.
From (\ref{eq:I}) it can be observed that the interfering MTs' locations of tier $k$ are obtained performing a conditional thinning over $\Psi^{(k)}$ using the event $\mathcal{O}^{(j,k)}_{\mathrm{MT}_i}$ that discards locations that would result associated with the probe BS and the event $\mathcal{Z}_{\mathrm{MT}_i}$ that discards locations that would cause higher interference than $i_0$. 

\section{Analysis of Interference Aware Power Control}
\label{sec:Analysis of Interference Aware Power Control}
In this section the average transmit power, mean and variance of the interference, the ccdf of the SINR and the average SE are obtained. 
The probability of being associated to tier $j$ has been previously obtained in \cite{Singh15} and is reproduced here for convenience
$
\Pr \left( \mathcal{X}_{\mathrm{MT}_{0}}^{(j)} \right)={\lambda ^{(j)}}/{\sum\limits_{k\in \mathcal{K}}{\left( \frac{t^{(k)}}{t^{(j)}} \right)^{\frac{2}{\alpha }}\lambda ^{(k)}}}
$
. 

The probability of being associated to tier $j$ and being the most interfered BS from tier $m$ is given with the following proposition. 
\begin{lemma}
\label{prop:Soft IAFPC PrXj}
The probability of the event $\mathcal{X}_{\mathrm{MT}_{0}}^{(j,m)}$ with $j \neq m$ for the typical MT using IAFPC is
\ifOneColumn
\begin{align}
\Pr & \left( \mathcal{X}_{\mathrm{MT}_{0}}^{(j,m)} \right)=\frac{\lambda ^{(j)}\lambda ^{(m)}\left( \frac{t^{(j)}}{t^{(m)}} \right)^{\frac{2}{\alpha }}}{\left( \lambda ^{(j)}+\lambda ^{(m)} \right)^{2}}\mathbf{1}\left( \left( \frac{t^{(j)}}{t^{(m)}} \right)>1 \right)+ 
\nonumber \\ 
& \left( \frac{\lambda ^{(m)}\left( 2\lambda ^{(j)}+\lambda ^{(m)} \right)}{\left( \lambda ^{(j)}+\lambda ^{(m)} \right)^{2}}-\frac{\lambda ^{(m)}}{\lambda ^{(j)}\left( \frac{t^{(j)}}{t^{(m)}} \right)^{\frac{2}{\alpha }}+\lambda ^{(m)}} \right) 
 \times \mathbf{1}\left( \left( \frac{t^{(j)}}{t^{(m)}} \right)\le 1 \right) 
\end{align}
\else
\begin{align}
\Pr & \left( \mathcal{X}_{\mathrm{MT}_{0}}^{(j,m)} \right)=\frac{\lambda ^{(j)}\lambda ^{(m)}\left( \frac{t^{(j)}}{t^{(m)}} \right)^{\frac{2}{\alpha }}}{\left( \lambda ^{(j)}+\lambda ^{(m)} \right)^{2}}\mathbf{1}\left( \left( \frac{t^{(j)}}{t^{(m)}} \right)>1 \right)+ 
\nonumber \\ 
& \left( \frac{\lambda ^{(m)}\left( 2\lambda ^{(j)}+\lambda ^{(m)} \right)}{\left( \lambda ^{(j)}+\lambda ^{(m)} \right)^{2}}-\frac{\lambda ^{(m)}}{\lambda ^{(j)}\left( \frac{t^{(j)}}{t^{(m)}} \right)^{\frac{2}{\alpha }}+\lambda ^{(m)}} \right) 
\nonumber \\
& \times \mathbf{1}\left( \left( \frac{t^{(j)}}{t^{(m)}} \right)\le 1 \right) 
\end{align}
\fi

If $j=m$ the probability is given below

\ifOneColumn
\begin{align}
\Pr & \left( \mathcal{X}_{\mathrm{MT}_{0}}^{(j,j)} \right)=\frac{\left( \lambda ^{(j)} \right)^{2}}{\left( \lambda ^{(j)}+\lambda ^{(\tilde{j})} \right)^{2}}\mathbf{1}\left( \left( \frac{t^{(\tilde{j})}}{t^{(j)}} \right)\le 1 \right)+ 
\nonumber \\ 
& \frac{\left( \lambda ^{(j)} \right)^{2}\left( \frac{t^{(j)}}{t^{(\tilde{j})}} \right)^{\frac{2}{\alpha }}\left( \lambda ^{(j)}-\lambda ^{(\tilde{j})}\left( \left( \frac{t^{(j)}}{t^{(\tilde{j})}} \right)^{\frac{2}{\alpha }}-2 \right) \right)}{\left( \lambda ^{(j)}+\lambda ^{(\tilde{j})} \right)^{2}\left( \lambda ^{(\tilde{j})}+\lambda ^{(j)}\left( \frac{t^{(j)}}{t^{(\tilde{j})}} \right)^{\frac{2}{\alpha }} \right)}
\times \mathbf{1}\left( \left( \frac{t^{(\tilde{j})}}{t^{(j)}} \right)>1 \right) 
\end{align}
\else
\begin{align}
\Pr & \left( \mathcal{X}_{\mathrm{MT}_{0}}^{(j,j)} \right)=\frac{\left( \lambda ^{(j)} \right)^{2}}{\left( \lambda ^{(j)}+\lambda ^{(\tilde{j})} \right)^{2}}\mathbf{1}\left( \left( \frac{t^{(\tilde{j})}}{t^{(j)}} \right)\le 1 \right)+ 
\nonumber \\ 
& \frac{\left( \lambda ^{(j)} \right)^{2}\left( \frac{t^{(j)}}{t^{(\tilde{j})}} \right)^{\frac{2}{\alpha }}\left( \lambda ^{(j)}-\lambda ^{(\tilde{j})}\left( \left( \frac{t^{(j)}}{t^{(\tilde{j})}} \right)^{\frac{2}{\alpha }}-2 \right) \right)}{\left( \lambda ^{(j)}+\lambda ^{(\tilde{j})} \right)^{2}\left( \lambda ^{(\tilde{j})}+\lambda ^{(j)}\left( \frac{t^{(j)}}{t^{(\tilde{j})}} \right)^{\frac{2}{\alpha }} \right)}
\nonumber \\
&\times \mathbf{1}\left( \left( \frac{t^{(\tilde{j})}}{t^{(j)}} \right)>1 \right) 
\end{align}
\fi

\end{lemma}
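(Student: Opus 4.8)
\noindent The plan is to rewrite the two defining events in terms of three independent exponential random variables and then evaluate the probability by elementary integration, the two indicator-weighted branches of the statement arising from a single case split on whether the serving tier's association weight dominates the other tier's. I would first pass to the displaced processes of Section~\ref{sec:Channel Models}, so that $\dot\Phi^{(k)}$ is a PPP of intensity $\dot\lambda^{(k)}=\lambda^{(k)}\mathbb{E}[S^{-1/\alpha}]$; since every probability below involves the intensities only through ratios, the common factor $\mathbb{E}[S^{-1/\alpha}]$ cancels and the answer depends on $\lambda^{(j)},\lambda^{(m)}$ alone. Writing $n$ for the relevant ``other'' tier ($n=m$ when $j\ne m$ and $n=\tilde j$ when $j=m$), let $R^{(j)}_{(1)}<R^{(j)}_{(2)}$ be the distances from the probe MT at the origin to the two nearest tier-$j$ BSs and $R^{(n)}_{(1)}$ the distance to the nearest tier-$n$ BS; the tier-$j$ and tier-$n$ quantities are independent. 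Since $\pi(R^{(j)}_{(q)})^{2}$ are the epochs of a rate-$\lambda^{(j)}$ Poisson process on $\mathbb{R}_{+}$, the variables $V_{1}:=\pi(R^{(j)}_{(1)})^{2}$, $E:=\pi(R^{(j)}_{(2)})^{2}-V_{1}$ and $W:=\pi(R^{(n)}_{(1)})^{2}$ are mutually independent, with $V_{1},E\sim\mathrm{Exp}(\lambda^{(j)})$, $W\sim\mathrm{Exp}(\lambda^{(n)})$ and $\pi(R^{(j)}_{(2)})^{2}=V_{1}+E$.

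Next I would translate the events. Isolating the distance ratio in \eqref{eq:Xj} and squaring recasts the association event $\mathcal{X}^{(j)}_{\mathrm{MT}_0}$ as $\{W>c\,V_{1}\}$ with $c:=(t^{(n)}/t^{(j)})^{2/\alpha}$. In the same way \eqref{eq:Xjm} gives $\mathcal{Q}^{(m)}_{\mathrm{MT}_0}=\{W<V_{1}+E\}$ when $j\ne m$ and $\mathcal{Q}^{(j)}_{\mathrm{MT}_0}=\{W>V_{1}+E\}$ when $j=m$, so that $\Pr(\mathcal{X}^{(j,m)}_{\mathrm{MT}_0})=\Pr(c\,V_{1}<W<V_{1}+E)$ for $j\ne m$, whereas $\Pr(\mathcal{X}^{(j,j)}_{\mathrm{MT}_0})=\Pr\big(W>\max(c\,V_{1},\,V_{1}+E)\big)$. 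Conditioning on $V_{1}=v$ and $E=e$ and integrating $W$ out with its exponential complementary cdf, the relevant dichotomy is $c\le1$ versus $c>1$: for $c\le1$ the interval $(cv,\,v+e)$ is a.s.\ nonempty and the maximum is realized by the $V_{1}+E$ branch, so a single integral suffices; for $c>1$ a nonempty interval (resp.\ the switch of the maximum to the $cV_{1}$ branch) requires the additional constraint $e>(c-1)v$, which splits the $e$-integral into two pieces. The remaining double integral over the independent exponentials $v$ and $e$ is a product/convolution of exponentials; crucially, two of the exponents in $v$ coincide after $c$ is substituted, which collapses the sum, and rewriting $c$ and $\lambda^{(n)}$ back in terms of $t^{(j)}/t^{(m)}$ (resp.\ $t^{(j)}/t^{(\tilde j)}$) and $\lambda^{(m)}$ (resp.\ $\lambda^{(\tilde j)}$) yields the two rational functions in the statement, the two weight regimes furnishing the indicators. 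As a consistency check, $\Pr(\mathcal{X}^{(j,j)}_{\mathrm{MT}_0})+\Pr(\mathcal{X}^{(j,\tilde j)}_{\mathrm{MT}_0})$ should reproduce the association probability $\Pr(\mathcal{X}^{(j)}_{\mathrm{MT}_0})$ recalled just before the lemma.

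I expect the main obstacle to be the bookkeeping in the $c>1$ regime, where the constraint on $W$ forces the auxiliary constraint on the gap $E$: this is precisely the origin of the two-piece, indicator-weighted form, and a sign slip in the threshold $(c-1)v$ would route probability mass into the wrong branch. The case $j=m$ is somewhat heavier, since $\max(c\,V_{1},\,V_{1}+E)$ is a maximum of two affine functions of $(V_{1},E)$ whose $e$-integral breaks at $e=(c-1)v$ into two pieces that must be recombined; the coincidence of exponents there is exactly what produces the compact form with denominator $(\lambda^{(j)}+\lambda^{(\tilde j)})^{2}\big(\lambda^{(\tilde j)}+\lambda^{(j)}(t^{(j)}/t^{(\tilde j)})^{2/\alpha}\big)$. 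Everything else is routine exponential integration.
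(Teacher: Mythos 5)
Your plan is correct and is essentially the paper's own proof: Appendix~\ref{app:Soft IAFPC PrXj} likewise conditions on the two nearest tier-$j$ distances, exploits independence across tiers, integrates the other tier's nearest-distance ccdf over the interval (resp.\ half-line) determined by $\mathcal{X}^{(j)}_{\mathrm{MT}_0}$ and $\mathcal{Q}^{(m)}_{\mathrm{MT}_0}$, and finishes with the joint pdf in (\ref{eq:joint pdf R1, R2}); your passage to the exponential variables $V_1$, $E$, $W$ is only a change of variables that tidies the same integrals, and your case split on $c\lessgtr 1$ is the same dichotomy. One caveat worth acting on: if you carry the computation through, your own consistency check $\Pr(\mathcal{X}^{(j,j)}_{\mathrm{MT}_0})+\Pr(\mathcal{X}^{(j,\tilde j)}_{\mathrm{MT}_0})=\Pr(\mathcal{X}^{(j)}_{\mathrm{MT}_0})$ will reveal that in the $j\neq m$ formula of the lemma the two indicators are interchanged: the product expression $\lambda^{(j)}\lambda^{(m)}\bigl(t^{(j)}/t^{(m)}\bigr)^{2/\alpha}/\bigl(\lambda^{(j)}+\lambda^{(m)}\bigr)^{2}$ is the value on $\{t^{(j)}/t^{(m)}\le 1\}$ (it grows without bound as $t^{(j)}/t^{(m)}\to\infty$, so it cannot be the $>1$ branch), while the second expression is the value on $\{t^{(j)}/t^{(m)}>1\}$; the $j=m$ case is printed with the correct indicators. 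So if your (correct) bookkeeping in the $c>1$ regime disagrees with the statement as printed, the discrepancy is in the statement, not in your derivation.
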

\begin{proof}
See Appendix \ref{app:Soft IAFPC PrXj}.
\end{proof}
As it can be observed from section \ref{sec:Power Control Mechanism} the computation of the transmit power (\ref{eq:pMT Soft})  requires knowledge of the joint pdf of distances towards the serving and most interfering BSs. This fact complicates further the analysis since it requires a joint pdf of distances and to deal with a nonlinear function, i.e. $\mathrm{min}(\cdot)$ function. The next proposition give such joint pdf of distances.
\begin{lemma}
\label{prop:Soft IAFPC f_RMT_UMT Cond Xjm}
The joint pdf of distances towards the serving and most interfered BS conditioned on event $\mathcal{X}_{\mathrm{MT}_0}^{(j,m)}$ for the typical MT using IAFPC is given as

\ifOneColumn
\begin{align}
\label{eq:Soft IAFPC Soft IAFPC f_RMT_UMT Cond Xjm j neq m}
& f_{R_{\mathrm{MT}_{0}},U_{\mathrm{MT}_{0}}}  \left( v,w|\mathcal{X}_{\mathrm{MT}_{0}}^{(j,m)} \right) =
  \frac{f_{R_{\mathrm{MT}_{0},(1)}^{(m)}}\left( w \right)\zeta ^{(j)}\left( v,w \right)}{\Pr 
  \left( \mathcal{X}_{\mathrm{MT}_{0}}^{(j,m)} \right)} 
  \mathbf{1} \left( w>\left( \frac{t^{(m)}}{t^{(j)}} \right)^\frac{1}{\alpha} v \right)
\end{align}
\else
\begin{align}
\label{eq:Soft IAFPC Soft IAFPC f_RMT_UMT Cond Xjm j neq m}
& f_{R_{\mathrm{MT}_{0}},U_{\mathrm{MT}_{0}}}  \left( v,w|\mathcal{X}_{\mathrm{MT}_{0}}^{(j,m)} \right) =
\nonumber \\
&\quad
  \frac{f_{R_{\mathrm{MT}_{0},(1)}^{(m)}}\left( w \right)\zeta ^{(j)}\left( v,w \right)}{\Pr 
  \left( \mathcal{X}_{\mathrm{MT}_{0}}^{(j,m)} \right)} 
  \mathbf{1} \left( w>\left( \frac{t^{(m)}}{t^{(j)}} \right)^\frac{1}{\alpha} v \right)
\end{align}
\fi

if $j \neq m$ and

\ifOneColumn
\begin{align}
\label{eq:Soft IAFPC Soft IAFPC f_RMT_UMT Cond Xjm j = m}
& f_{R_{\mathrm{MT}_{0}},U_{\mathrm{MT}_{0}}}  \left( v,w|\mathcal{X}_{\mathrm{MT}_{0}}^{(j,m)} \right) = 
    \frac{
    \bar{F}_{R_{\mathrm{MT}_{0},(1)}^{(\tilde{j})}}\left( \mathrm{max} \left( \left( \frac{t^{(\tilde{j})}}{t^{(j)}}
    \right)^{\frac{1}{\alpha }}v,w \right) \right)}{\Pr \left( \mathcal{X}_{\mathrm{MT}_{0}}^{(j,j)} 
    \right)}
     f_{R_{\mathrm{MT}_{0},(1)}^{(j)},R_{\mathrm{MT}_{0},(2)}^{(j)}}\left( v,w \right) 
    \mathbf{1} \left( w > v \right)
\end{align}
\else
\begin{align}
\label{eq:Soft IAFPC Soft IAFPC f_RMT_UMT Cond Xjm j = m}
& f_{R_{\mathrm{MT}_{0}},U_{\mathrm{MT}_{0}}}  \left( v,w|\mathcal{X}_{\mathrm{MT}_{0}}^{(j,m)} \right) = \nonumber \\
& \quad     
    \frac{
    \bar{F}_{R_{\mathrm{MT}_{0},(1)}^{(\tilde{j})}}\left( \mathrm{max} \left( \left( \frac{t^{(\tilde{j})}}{t^{(j)}}
    \right)^{\frac{1}{\alpha }}v,w \right) \right)}{\Pr \left( \mathcal{X}_{\mathrm{MT}_{0}}^{(j,j)} 
    \right)}
    \nonumber \\
    & \quad \times f_{R_{\mathrm{MT}_{0},(1)}^{(j)},R_{\mathrm{MT}_{0},(2)}^{(j)}}\left( v,w \right) 
    \mathbf{1} \left( w > v \right)
\end{align}
\fi
\noindent with $j=m$  where
\begin{align}
\label{eq:Soft IAFPC zeta(v,w)}
\zeta ^{(j)}\left( v,w \right)=2\pi \lambda ^{(j)}v\mathrm{e}^{-\pi \lambda ^{(j)}\mathrm{max} ^{2}\left( v,w \right)}
\end{align}
The joint pdf of the nearest and second nearest point has been obtained in \cite{Martin-Vega14} and is given by 
\begin{equation}
\label{eq:joint pdf R1, R2}
f_{R^{(j)}_{\mathrm{MT}_0,(1)}, R^{(j)}_{\mathrm{MT}_0,(2)}}(r_1,r_2) =
 4 \left( \pi \lambda^{(j)} \right)^2 r_1 r_2  \mathrm{e}^{-\pi \lambda^{(j)} r_2^2}, \, r_1 < r_2
\end{equation}
\end{lemma}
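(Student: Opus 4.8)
The plan is to compute the joint pdf of $(R_{\mathrm{MT}_0}, U_{\mathrm{MT}_0})$ conditioned on $\mathcal{X}_{\mathrm{MT}_0}^{(j,m)}$ by starting from the unconditional joint behavior of the relevant inter-point distances in the two independent PPPs $\Phi^{(j)}$ and $\Phi^{(m)}$, applying the events defining $\mathcal{X}^{(j,m)}$ from \eqref{eq:Xj} and \eqref{eq:Xjm}, and then normalizing by $\Pr(\mathcal{X}_{\mathrm{MT}_0}^{(j,m)})$ from Lemma~\ref{prop:Soft IAFPC PrXj}. I would treat the two cases $j\neq m$ and $j=m$ separately because the event $\mathcal{Q}^{(m)}$ has a genuinely different structure in each: when $j\neq m$, being ``most interfered by tier $m$'' means the nearest tier-$m$ BS is closer than the \emph{second} nearest tier-$j$ BS, whereas when $j=m$ it means the second nearest tier-$j$ BS is closer than the nearest tier-$\tilde j$ BS.

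For the case $j\neq m$: here $R_{\mathrm{MT}_0}=R^{(j)}_{\mathrm{MT}_0,(1)}$ (serving BS is the nearest tier-$j$ BS, since the MT is associated with tier $j$) and $U_{\mathrm{MT}_0}=R^{(m)}_{\mathrm{MT}_0,(1)}$ (most interfered is the nearest tier-$m$ BS). By independence of $\Phi^{(j)}$ and $\Phi^{(m)}$, the unconditional joint density of these two distances factorizes as $f_{R^{(j)}_{\mathrm{MT}_0,(1)}}(v)\,f_{R^{(m)}_{\mathrm{MT}_0,(1)}}(w)$, with each marginal being the standard nearest-neighbor density $2\pi\lambda v\,\mathrm{e}^{-\pi\lambda v^2}$. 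I then impose the three constraints that carve out $\mathcal{X}_{\mathrm{MT}_0}^{(j,m)}$: association with tier $j$ over tier $m$ gives $w > (t^{(m)}/t^{(j)})^{1/\alpha} v$ (this is the indicator that survives in the statement); the event $\mathcal{Q}^{(m)}$ requires $R^{(j)}_{\mathrm{MT}_0,(2)} > w$, which, conditioned on the nearest tier-$j$ point being at $v$, has probability $\mathrm{e}^{-\pi\lambda^{(j)}(w^2 - v^2)}$ for $w>v$ and $1$ for $w\le v$ — this is exactly where the $\mathrm{e}^{-\pi\lambda^{(j)}\mathrm{max}^2(v,w)}$ factor in $\zeta^{(j)}(v,w)$ comes from after multiplying by the $\mathrm{e}^{-\pi\lambda^{(j)}v^2}$ from the marginal of $R^{(j)}_{\mathrm{MT}_0,(1)}$; finally there is the outer-competition constraint $w<R^{(\tilde m)}$ only if $\tilde m$ differs from both, but with two tiers $\tilde m = j$, so nothing new is added. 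Collecting terms yields $f_{R^{(m)}_{\mathrm{MT}_0,(1)}}(w)\,\zeta^{(j)}(v,w)$ over the region $w>(t^{(m)}/t^{(j)})^{1/\alpha}v$, and dividing by $\Pr(\mathcal{X}_{\mathrm{MT}_0}^{(j,m)})$ gives \eqref{eq:Soft IAFPC Soft IAFPC f_RMT_UMT Cond Xjm j neq m}.

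For the case $j=m$: now $R_{\mathrm{MT}_0}=R^{(j)}_{\mathrm{MT}_0,(1)}$ and $U_{\mathrm{MT}_0}=R^{(j)}_{\mathrm{MT}_0,(2)}$, so the relevant unconditional object is the joint pdf of the first and second nearest points of a single PPP, namely \eqref{eq:joint pdf R1, R2} evaluated at $(v,w)$ with $v<w$. The conditioning event $\mathcal{X}^{(j,j)}$ requires both that the MT is associated with tier $j$ and that $R^{(j)}_{\mathrm{MT}_0,(2)} < R^{(\tilde j)}_{\mathrm{MT}_0,(1)}$; combining, the single tier-$\tilde j$ constraint is that the nearest tier-$\tilde j$ BS lies beyond $\mathrm{max}((t^{(\tilde j)}/t^{(j)})^{1/\alpha}v,\; w)$ — the first argument enforces association, the second enforces ``most interfered is tier $j$.'' Since $\Phi^{(\tilde j)}$ is independent, this happens with probability $\bar F_{R^{(\tilde j)}_{\mathrm{MT}_0,(1)}}(\mathrm{max}(\cdots))$. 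Multiplying this by \eqref{eq:joint pdf R1, R2} and normalizing by $\Pr(\mathcal{X}_{\mathrm{MT}_0}^{(j,j)})$ gives \eqref{eq:Soft IAFPC Soft IAFPC f_RMT_UMT Cond Xjm j = m}.

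The main obstacle, and the step I would be most careful about, is the bookkeeping of which inequality among $v$, $w$, $(t^{(m)}/t^{(j)})^{1/\alpha}v$, and the tier-$\tilde j$ distance is active — particularly making sure the $\mathrm{max}$ inside $\zeta^{(j)}$ and inside the ccdf genuinely captures \emph{both} the association constraint and the $\mathcal{Q}^{(m)}$ constraint simultaneously, rather than double-counting or dropping a region. A secondary check is verifying consistency: integrating \eqref{eq:Soft IAFPC Soft IAFPC f_RMT_UMT Cond Xjm j neq m} and \eqref{eq:Soft IAFPC Soft IAFPC f_RMT_UMT Cond Xjm j = m} over their supports must return $1$, which amounts to re-deriving the marginal probabilities in Lemma~\ref{prop:Soft IAFPC PrXj}; doing this integral is the natural sanity test that the max-regions have been set up correctly. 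I would relegate the full details to the appendix referenced by the lemma.
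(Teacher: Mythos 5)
Your proposal is correct and follows essentially the same route as the paper's Appendix B: decompose $\mathcal{X}^{(j,m)}_{\mathrm{MT}_0}$ into constraints on $R^{(j)}_{\mathrm{MT}_0,(1)}$, $R^{(j)}_{\mathrm{MT}_0,(2)}$ and the nearest point of the other tier, exploit independence across tiers, and normalize by $\Pr(\mathcal{X}^{(j,m)}_{\mathrm{MT}_0})$. The only (cosmetic) difference is that the paper writes the joint probability as a nested integral and differentiates via Leibniz's rule, whereas you build the density directly from the nearest-neighbor marginal times the void-probability factor $\mathrm{e}^{-\pi\lambda^{(j)}(\max^2(v,w)-v^2)}$ — which is exactly what the paper's inner integration over $r^{(j)}_2$ produces, so the two computations coincide.
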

\begin{proof}
See Appendix \ref{app:Soft IAFPC f_RMT_UMT Cond Xjm}.
\end{proof}
The joint pdf of distances given in the previous proposition allows to obtain the average transmitted power as

\ifOneColumn
\begin{align}
\mathbb{E}\left[ P_{\mathrm{MT}_{0}} \right] = 
\sum\limits_{j\in \mathcal{K}}{\sum\limits_{m\in \mathcal{K}}{\Pr \left( \mathcal{X}_{\mathrm{MT}_{0}}^{(j,m)} \right)\times }} 
 \mathbb{E}_{R_{\mathrm{MT}_{0}},U_{\mathrm{MT}_{0}}}\left[ p_{\mathrm{MT}}\left( R_{\mathrm{MT}_{0}},U_{\mathrm{MT}_{0}} \right)|\mathcal{X}_{\mathrm{MT}_{0}}^{(j,m)} \right] 
\end{align}
\else
\begin{align}
\mathbb{E}\left[ P_{\mathrm{MT}_{0}} \right] &= 
\sum\limits_{j\in \mathcal{K}}{\sum\limits_{m\in \mathcal{K}}{\Pr \left( \mathcal{X}_{\mathrm{MT}_{0}}^{(j,m)} \right)\times }} 
\nonumber \\ 
& \mathbb{E}_{R_{\mathrm{MT}_{0}},U_{\mathrm{MT}_{0}}}\left[ p_{\mathrm{MT}}\left( R_{\mathrm{MT}_{0}},U_{\mathrm{MT}_{0}} \right)|\mathcal{X}_{\mathrm{MT}_{0}}^{(j,m)} \right] 
\end{align}
\fi
\noindent where the function of the transmitted power according to the distance to the serving and most interfered BS is given in (\ref{eq:pMT Soft}). The Laplace transform of the interference is given with the following proposition
\begin{lemma}
\label{prop:Soft IAFPC LI}
The Laplace transform of the interference for the typical MT using IAFPC is
\begin{equation}
\label{eq:Soft IAFPC LI}
\mathcal{L}_{I}\left( s|\mathcal{X}_{\mathrm{MT}_{0}}^{(j)} \right)=\exp \left( \beta ^{(j)}\left( s \right) \right)
\end{equation}
where
%
\begin{align}
\label{eq:Soft IAFPC theta^j}
& \beta ^{(j)}\left( s \right)
= - \sum\limits_{k\in \mathcal{K}}{2\pi \lambda ^{(k)}\sum\limits_{n\in \mathcal{K}}{\Pr \left( \mathcal{Q}_{\mathrm{MT}_{i}}^{(n)}|\mathcal{X}_{\mathrm{MT}_{i}}^{(k)} \right)}}  
\nonumber \\ 
& \int\limits_{r=0}^{\infty }{\int\limits_{u=\left( \frac{t^{(n)}}{t^{(k)}} \right)^{\frac{1}{\alpha }}r}^{\infty }{f_{R_{\mathrm{MT}_{i}},U_{\mathrm{MT}_{i}}}\left( r,u|\mathcal{X}_{\mathrm{MT}_{i}}^{(k,n)} \right)}} \chi \left( s,r,u \right)\mathrm{d}r\mathrm{d}u 
\end{align}
\noindent and
\ifOneColumn
\begin{align}
& \chi \left( s,r,u \right)=\frac{sp_{\mathrm{MT}}\left( r,u \right)\tau ^{-\alpha }}{\alpha -2} 
\mathrm{max}^{2-\alpha }\left( \left( \frac{t^{(j)}}{t^{(k)}} \right)^{\frac{1}{\alpha }}r,\frac{1}{\tau }\left( \frac{p_{\mathrm{MT}}\left( r,u \right)}{i_{0}} \right)^{\frac{1}{\alpha }} \right) 
\nonumber \\ 
& \quad \,_{2}F_{1} \Bigg(1,\frac{\alpha -2}{\alpha };2-\frac{2}{\alpha };-sp_{\mathrm{MT}}\left( r,u \right)\tau ^{-\alpha } 
\mathrm{max}^{-\alpha }\left( \left( \frac{t^{(j)}}{t^{(k)}} \right)^{\frac{1}{\alpha }}r,\frac{1}{\tau }\left( \frac{p_{\mathrm{MT}}\left( r,u \right)}{i_{0}} \right)^{\frac{1}{\alpha }} \right)\Bigg) 
\end{align}
\else
\begin{align}
& \chi \left( s,r,u \right)=\frac{sp_{\mathrm{MT}}\left( r,u \right)\tau ^{-\alpha }}{\alpha -2} 
\nonumber \\ 
& \quad \mathrm{max}^{2-\alpha }\left( \left( \frac{t^{(j)}}{t^{(k)}} \right)^{\frac{1}{\alpha }}r,\frac{1}{\tau }\left( \frac{p_{\mathrm{MT}}\left( r,u \right)}{i_{0}} \right)^{\frac{1}{\alpha }} \right) 
\nonumber \\ 
& \quad \,_{2}F_{1} \Bigg(1,\frac{\alpha -2}{\alpha };2-\frac{2}{\alpha };-sp_{\mathrm{MT}}\left( r,u \right)\tau ^{-\alpha } 
\nonumber \\ 
& \quad \mathrm{max}^{-\alpha }\left( \left( \frac{t^{(j)}}{t^{(k)}} \right)^{\frac{1}{\alpha }}r,\frac{1}{\tau }\left( \frac{p_{\mathrm{MT}}\left( r,u \right)}{i_{0}} \right)^{\frac{1}{\alpha }} \right)\Bigg) 
\end{align}
\fi
\noindent being $p_{\mathrm{MT}}\left( r,u \right)$ given by (\ref{eq:pMT Soft}).
\end{lemma}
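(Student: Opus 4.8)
The plan is to compute $\mathcal{L}_I(s\mid\mathcal{X}_{\mathrm{MT}_0}^{(j)}) = \mathbb{E}[\mathrm{e}^{-sI}\mid\mathcal{X}_{\mathrm{MT}_0}^{(j)}]$ starting from the expression (\ref{eq:I}) for $I$ and exploiting the probability generating functional (PGFL) of a PPP. First I would condition on the serving tier $j$ of the probe MT so that the relevant weighted-power constraint $\mathcal{O}^{(j,k)}_{\mathrm{MT}_i}$ is fixed, and then split the double sum in (\ref{eq:I}) over the interfering tiers $k\in\mathcal{K}$. Because fading, shadowing, and the marks of distinct interferers are independent, $\mathbb{E}[\mathrm{e}^{-sI}]$ factorizes into a product over $k$, and each factor is a PPP PGFL of the form $\exp\!\big(-\lambda^{(k)}\!\int_{\mathbb{R}^2}(1-\mathbb{E}_{\text{marks}}[\,\cdot\,])\,\mathrm{d}x\big)$. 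Summing the exponents gives the additive structure $\mathcal{L}_I=\exp(\beta^{(j)}(s))$ claimed in (\ref{eq:Soft IAFPC LI}), so the task reduces to identifying $\beta^{(j)}(s)$ with (\ref{eq:Soft IAFPC theta^j}).

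Next I would handle the marks of a generic interfering $\mathrm{MT}_i$ in tier $k$. Its transmit power $P_{\mathrm{MT}_i}=p_{\mathrm{MT}}(R_{\mathrm{MT}_i},U_{\mathrm{MT}_i})$ depends on its own distances to its serving and most-interfered BS, not on the probe BS, so I would further condition on the event $\mathcal{Q}^{(n)}_{\mathrm{MT}_i}$ (most-interfered BS in tier $n$) — whence the sum over $n$ weighted by $\Pr(\mathcal{Q}^{(n)}_{\mathrm{MT}_i}\mid\mathcal{X}^{(k)}_{\mathrm{MT}_i})$ — and average over $(R_{\mathrm{MT}_i},U_{\mathrm{MT}_i})$ using the conditional joint pdf from Lemma \ref{prop:Soft IAFPC f_RMT_UMT Cond Xjm}, which accounts for the integration region $u>(t^{(n)}/t^{(k)})^{1/\alpha}r$. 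For fixed $(r,u)$, hence fixed $p_{\mathrm{MT}}(r,u)$, the exponential multipath fading $H_{\mathrm{MT}_i}\sim\mathrm{Exp}(1)$ is averaged in closed form: $\mathbb{E}_H[\mathrm{e}^{-s H \,g}]=1/(1+sg)$ with $g=(\tau D)^{-\alpha}p_{\mathrm{MT}}(r,u)$. The two indicator functions in (\ref{eq:I}) then become constraints on the spatial integral over the probe-BS-relative location $D=D_{\mathrm{MT}_i}$ of the interferer: $\mathcal{O}^{(j,k)}_{\mathrm{MT}_i}$ forces $D>(t^{(j)}/t^{(k)})^{1/\alpha}R_{\mathrm{MT}_i}=(t^{(j)}/t^{(k)})^{1/\alpha}r$, while $\mathcal{Z}_{\mathrm{MT}_i}$ (from (\ref{eq:ZMTi})) forces $P_{\mathrm{MT}_i}(\tau D)^{-\alpha}<i_0$, i.e. $D>\tfrac1\tau(p_{\mathrm{MT}}(r,u)/i_0)^{1/\alpha}$. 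Both lower bounds combine into the single lower limit $\mathrm{max}\big((t^{(j)}/t^{(k)})^{1/\alpha}r,\ \tfrac1\tau(p_{\mathrm{MT}}(r,u)/i_0)^{1/\alpha}\big)$ appearing in $\chi(s,r,u)$.

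What remains is the radial integral $2\pi\int_{D_{\min}}^\infty \big(1-\tfrac{1}{1+s p_{\mathrm{MT}}(r,u)(\tau D)^{-\alpha}}\big)D\,\mathrm{d}D$; the substitution $t=D^2$ (or $t=D^{-\alpha}$) turns this into a standard integral that evaluates to the Gauss hypergeometric form $\tfrac{s p_{\mathrm{MT}}\tau^{-\alpha}}{\alpha-2}D_{\min}^{2-\alpha}\,{}_2F_1\!\big(1,\tfrac{\alpha-2}{\alpha};2-\tfrac2\alpha;-s p_{\mathrm{MT}}\tau^{-\alpha}D_{\min}^{-\alpha}\big)$, which is precisely $\chi(s,r,u)$; collecting the prefactors $-2\pi\lambda^{(k)}$, the sum over $n$, and the expectation over $(r,u)$ reproduces (\ref{eq:Soft IAFPC theta^j}). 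I expect the main obstacle to be bookkeeping rather than analysis: one must argue carefully that the conditional-thinning indicators $\mathbf{1}(\mathcal{O})$ and $\mathbf{1}(\mathcal{Z})$ can be pushed through the PGFL as a deterministic restriction of the integration domain (valid because, given the interferer's own distances $(r,u)$ and its location $D$ relative to the probe BS, these events are deterministic), and that the displacement/marking of $\Phi^{(k)}$ by shadowing and the independent draws of $(R_{\mathrm{MT}_i},U_{\mathrm{MT}_i})$ per interferer preserve the PPP property needed for the PGFL. The closed-form evaluation of the radial integral and the check $\alpha>2$ for convergence are routine by comparison.
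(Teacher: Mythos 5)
Your proposal is correct and follows essentially the same route as the paper's proof: factorize $\mathbb{E}[\mathrm{e}^{-sI}]$ over tiers via the PGFL, average the exponential fading to get the $\tfrac{sg}{1+sg}$ kernel, convert the two thinning indicators $\mathbf{1}(\mathcal{O}^{(j,k)}_{\mathrm{MT}_i})$ and $\mathbf{1}(\mathcal{Z}_{\mathrm{MT}_i})$ into the $\max(\cdot,\cdot)$ lower limit of the radial integral, condition on $\mathcal{Q}^{(n)}_{\mathrm{MT}_i}$ with the joint pdf of $(R_{\mathrm{MT}_i},U_{\mathrm{MT}_i})$, and evaluate the radial integral to the ${}_2F_1$ form. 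Your explicit remark about justifying the deterministic restriction of the integration domain inside the PGFL is a point the paper leaves implicit, but it does not change the argument.
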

\begin{proof}
See Appendix \ref{app:Soft IAFPC LI}.
\end{proof}

The Laplace transform of the interference allows to obtain the mean and the variance of the interference using the first and second derivatives of $\beta^{(j)}(s)$. Those metrics are given with the next proposition.

\begin{proposition}
\label{prop:Soft IAFPC E(I) var(I)}
The mean and variance of the interference of the typical MT are given as follows
\ifOneColumn
\begin{align}
\label{eq:Soft IAFPC var(I)}
    \mathbb{E}\left[ I \right] &= -\sum\limits_{j\in \mathcal{K}}
    {\Pr \left(\mathcal{X}_{\mathrm{MT}_{0}}^{(j)} \right)\beta '^{(j)}\left( 0 \right)} 
    \\
    \operatorname{var}\left( I \right) &=-\sum\limits_{j\in \mathcal{K}}{\Pr \left(
    \mathcal{X}_{\mathrm{MT}_{0}}^{(j)} \right)} 
     \left( \beta ''^{(j)}\left( 0 \right)+\left( \beta '^{(j)}\left( 0 \right)
    \right)^{2}-\left( \mathbb{E}\left[ I \right] \right)^{2} \right) 
\end{align}
\else
\begin{align}
\label{eq:Soft IAFPC var(I)}
    \mathbb{E}\left[ I \right] &= -\sum\limits_{j\in \mathcal{K}}
    {\Pr \left(\mathcal{X}_{\mathrm{MT}_{0}}^{(j)} \right)\beta '^{(j)}\left( 0 \right)} 
    \\
    \operatorname{var}\left( I \right) &=-\sum\limits_{j\in \mathcal{K}}{\Pr \left(
    \mathcal{X}_{\mathrm{MT}_{0}}^{(j)} \right)} 
    \nonumber \\ 
    & \left( \beta ''^{(j)}\left( 0 \right)+\left( \beta '^{(j)}\left( 0 \right)
    \right)^{2}-\left( \mathbb{E}\left[ I \right] \right)^{2} \right) 
\end{align}
\fi
\noindent being 
\ifOneColumn
\begin{align}
& \beta '^{(j)}\left( 0 \right)=-\sum\limits_{k\in \mathcal{K}}{2\pi \lambda ^{(k)}}\sum\limits_{n\in \mathcal{K}}{\Pr \left( \mathcal{Q}_{\mathrm{MT}_{i}}^{(n)}|\mathcal{X}_{\mathrm{MT}_{i}}^{(k)} \right) } 
 \int\limits_{r=0}^{\infty }{\int\limits_{u=\left( \frac{t^{(n)}}{t^{(k)}} \right)^{\frac{1}{\alpha }}r}^{\infty }{f_{R_{\mathrm{MT}_{i}},U_{\mathrm{MT}_{i}}}\left( r,u|\mathcal{X}_{\mathrm{MT}_{i}}^{(k,n)} \right)}}
\nonumber \\ 
& \frac{\tau ^{-\alpha }p_{\mathrm{MT}}\left( r,u \right)}{\alpha -2}  \mathrm{max}^{2-\alpha }\left( \left( \frac{t^{(j)}}{t^{(k)}} \right)^{\frac{1}{\alpha }}r,
\frac{1}{\tau }\left( \frac{p_{\mathrm{MT}}\left( r,u \right)}{i_{0}} \right)^{\frac{1}{\alpha }}
\right)\mathrm{d}u\mathrm{d}r  
\end{align}
\else
\begin{align}
& \beta '^{(j)}\left( 0 \right)=-\sum\limits_{k\in \mathcal{K}}{2\pi \lambda ^{(k)}}\sum\limits_{n\in \mathcal{K}}{\Pr \left( \mathcal{Q}_{\mathrm{MT}_{i}}^{(n)}|\mathcal{X}_{\mathrm{MT}_{i}}^{(k)} \right)\times } 
\nonumber \\ 
& \; \int\limits_{r=0}^{\infty }{\int\limits_{u=\left( \frac{t^{(n)}}{t^{(k)}} \right)^{\frac{1}{\alpha }}r}^{\infty }{f_{R_{\mathrm{MT}_{i}},U_{\mathrm{MT}_{i}}}\left( r,u|\mathcal{X}_{\mathrm{MT}_{i}}^{(k,n)} \right)}}\frac{\tau ^{-\alpha }p_{\mathrm{MT}}\left( r,u \right)}{\alpha -2} 
\nonumber \\ 
& \; \mathrm{max}^{2-\alpha }\left( \left( \frac{t^{(j)}}{t^{(k)}} \right)^{\frac{1}{\alpha }}r,
\frac{1}{\tau }\left( \frac{p_{\mathrm{MT}}\left( r,u \right)}{i_{0}} \right)^{\frac{1}{\alpha }}
\right)\mathrm{d}u\mathrm{d}r  
\end{align}
\fi
\ifOneColumn
\begin{align}
& \beta ''^{(j)}\left( 0 \right)= - \sum\limits_{k\in \mathcal{K}}{2\pi \lambda ^{(k)}}\sum\limits_{n\in \mathcal{K}}{\Pr \left( \mathcal{Q}_{\mathrm{MT}_{i}}^{(n)}|\mathcal{X}_{\mathrm{MT}_{i}}^{(k)} \right)\times } 
\int\limits_{r=0}^{\infty }{\int\limits_{u=\left( \frac{t^{(n)}}{t^{(k)}} \right)^{\frac{1}{\alpha }}r}^{\infty }{f_{R_{\mathrm{MT}_{i}},U_{\mathrm{MT}_{i}}}\left( r,u|\mathcal{X}_{\mathrm{MT}_{i}}^{(k,n)} \right)}}
\nonumber \\ 
& \quad \frac{\left( \tau ^{-\alpha }p_{\mathrm{MT}}\left( r,u \right) \right)^{2}}{1-\alpha}   \mathrm{max}^{2\left( 1-\alpha  \right)}\left( \left( \frac{t^{(j)}}{t^{(k)}} \right)^{\frac{1}{\alpha }}r,
\frac{1}{\tau }\left( \frac{p_{\mathrm{MT}}\left( r,u \right)}{i_{0}} \right)^{\frac{1}{\alpha }}
\right)\mathrm{d}u\mathrm{d}r 
\end{align}
\else
\begin{align}
& \beta ''^{(j)}\left( 0 \right)= - \sum\limits_{k\in \mathcal{K}}{2\pi \lambda ^{(k)}}\sum\limits_{n\in \mathcal{K}}{\Pr \left( \mathcal{Q}_{\mathrm{MT}_{i}}^{(n)}|\mathcal{X}_{\mathrm{MT}_{i}}^{(k)} \right)\times } 
\nonumber \\ 
& \; \int\limits_{r=0}^{\infty }{\int\limits_{u=\left( \frac{t^{(n)}}{t^{(k)}} \right)^{\frac{1}{\alpha }}r}^{\infty }{f_{R_{\mathrm{MT}_{i}},U_{\mathrm{MT}_{i}}}\left( r,u|\mathcal{X}_{\mathrm{MT}_{i}}^{(k,n)} \right)}}\frac{\left( \tau ^{-\alpha }p_{\mathrm{MT}}\left( r,u \right) \right)^{2}}{1-\alpha} 
\nonumber \\ 
& \;  \mathrm{max}^{2\left( 1-\alpha  \right)}\left( \left( \frac{t^{(j)}}{t^{(k)}} \right)^{\frac{1}{\alpha }}r,
\frac{1}{\tau }\left( \frac{p_{\mathrm{MT}}\left( r,u \right)}{i_{0}} \right)^{\frac{1}{\alpha }}
\right)\mathrm{d}u\mathrm{d}r 
\end{align}
\fi

\end{proposition}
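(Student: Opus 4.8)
The plan is to obtain both quantities by conditioning on the serving tier and reading the conditional moments off the closed form of the Laplace transform in Lemma~\ref{prop:Soft IAFPC LI}. Since the events $\{\mathcal{X}_{\mathrm{MT}_{0}}^{(j)}\}_{j\in\mathcal{K}}$ partition the probability space, the tower rule gives $\mathbb{E}[I]=\sum_{j\in\mathcal{K}}\Pr(\mathcal{X}_{\mathrm{MT}_{0}}^{(j)})\,\mathbb{E}[I\mid\mathcal{X}_{\mathrm{MT}_{0}}^{(j)}]$ and, likewise, $\mathbb{E}[I^{2}]=\sum_{j\in\mathcal{K}}\Pr(\mathcal{X}_{\mathrm{MT}_{0}}^{(j)})\,\mathbb{E}[I^{2}\mid\mathcal{X}_{\mathrm{MT}_{0}}^{(j)}]$. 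The conditional moments follow from $\mathcal{L}_{I}(s\mid\mathcal{X}_{\mathrm{MT}_{0}}^{(j)})=\exp(\beta^{(j)}(s))$ via $\mathbb{E}[I\mid\mathcal{X}_{\mathrm{MT}_{0}}^{(j)}]=-\mathcal{L}_{I}'(0\mid\mathcal{X}_{\mathrm{MT}_{0}}^{(j)})$ and $\mathbb{E}[I^{2}\mid\mathcal{X}_{\mathrm{MT}_{0}}^{(j)}]=\mathcal{L}_{I}''(0\mid\mathcal{X}_{\mathrm{MT}_{0}}^{(j)})$, which are legitimate because the interference moments are finite for $\alpha>2$ (see below), so $\mathcal{L}_{I}(\cdot\mid\mathcal{X}_{\mathrm{MT}_{0}}^{(j)})$ is twice differentiable at the origin. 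Since $\mathcal{L}_{I}(0\mid\mathcal{X}_{\mathrm{MT}_{0}}^{(j)})=1$, i.e. $\beta^{(j)}(0)=0$, the chain rule gives $\frac{\mathrm{d}}{\mathrm{d}s}\mathrm{e}^{\beta^{(j)}}=\beta'^{(j)}\mathrm{e}^{\beta^{(j)}}$ and $\frac{\mathrm{d}^{2}}{\mathrm{d}s^{2}}\mathrm{e}^{\beta^{(j)}}=\big(\beta''^{(j)}+(\beta'^{(j)})^{2}\big)\mathrm{e}^{\beta^{(j)}}$, whence $\mathbb{E}[I\mid\mathcal{X}_{\mathrm{MT}_{0}}^{(j)}]=-\beta'^{(j)}(0)$ and $\mathbb{E}[I^{2}\mid\mathcal{X}_{\mathrm{MT}_{0}}^{(j)}]=\beta''^{(j)}(0)+(\beta'^{(j)}(0))^{2}$. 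Substituting the first relation into $\mathbb{E}[I]$ yields the stated mean; substituting the second into $\operatorname{var}(I)=\mathbb{E}[I^{2}]-(\mathbb{E}[I])^{2}$ and using $\sum_{j\in\mathcal{K}}\Pr(\mathcal{X}_{\mathrm{MT}_{0}}^{(j)})=1$ to absorb $(\mathbb{E}[I])^{2}$ into the sum over $j$ yields the stated variance. (As a cross-check, the law of total variance gives the same identity once one notes that $\beta^{(j)}$ is a cumulant generating function, so $\beta''^{(j)}(0)=\operatorname{var}(I\mid\mathcal{X}_{\mathrm{MT}_{0}}^{(j)})\ge 0$.)

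It then remains to make $\beta'^{(j)}(0)$ and $\beta''^{(j)}(0)$ explicit. I would differentiate the representation of $\beta^{(j)}(s)$ in (\ref{eq:Soft IAFPC theta^j}) once and twice under the double integral; this leaves the outer summation/integration structure untouched and replaces $\chi(s,r,u)$ by $\partial_{s}\chi(s,r,u)\big|_{s=0}$ and $\partial_{s}^{2}\chi(s,r,u)\big|_{s=0}$. These two Taylor coefficients are elementary: since ${}_{2}F_{1}(a,b;c;0)=1$, the closed form of $\chi$ in Lemma~\ref{prop:Soft IAFPC LI} is $O(s)$, so $\chi(0,r,u)=0$ and $\partial_{s}\chi(s,r,u)\big|_{s=0}=\frac{\tau^{-\alpha}p_{\mathrm{MT}}(r,u)}{\alpha-2}\,\mathrm{max}^{2-\alpha}\!\big((t^{(j)}/t^{(k)})^{1/\alpha}r,\frac{1}{\tau}(p_{\mathrm{MT}}(r,u)/i_{0})^{1/\alpha}\big)$; continuing the expansion one order further with $\frac{\mathrm{d}}{\mathrm{d}z}{}_{2}F_{1}(a,b;c;z)\big|_{z=0}=\frac{ab}{c}$ for $(a,b,c)=(1,\frac{\alpha-2}{\alpha},2-\frac{2}{\alpha})$ gives $\partial_{s}^{2}\chi(s,r,u)\big|_{s=0}=\frac{(\tau^{-\alpha}p_{\mathrm{MT}}(r,u))^{2}}{1-\alpha}\,\mathrm{max}^{2(1-\alpha)}\!\big((t^{(j)}/t^{(k)})^{1/\alpha}r,\frac{1}{\tau}(p_{\mathrm{MT}}(r,u)/i_{0})^{1/\alpha}\big)$, with $p_{\mathrm{MT}}$ as in (\ref{eq:pMT Soft}). (Equivalently, starting from the radial-integral form $\chi(s,r,u)=\int_{M(r,u)}^{\infty}\frac{s\,p_{\mathrm{MT}}(r,u)\tau^{-\alpha}x^{-\alpha}}{1+s\,p_{\mathrm{MT}}(r,u)\tau^{-\alpha}x^{-\alpha}}\,x\,\mathrm{d}x$ underlying Lemma~\ref{prop:Soft IAFPC LI}, with $M(r,u)$ the $\mathrm{max}(\cdot,\cdot)$ appearing above, the same coefficients drop out on differentiating under that integral and using $\int_{M}^{\infty}x^{1-\alpha}\,\mathrm{d}x=\frac{M^{2-\alpha}}{\alpha-2}$ and $\int_{M}^{\infty}x^{1-2\alpha}\,\mathrm{d}x=\frac{M^{2-2\alpha}}{2\alpha-2}$.) Feeding these coefficients back into the differentiated form of (\ref{eq:Soft IAFPC theta^j}) reproduces exactly the displayed expressions for $\beta'^{(j)}(0)$ and $\beta''^{(j)}(0)$.

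The main obstacle is the analytic bookkeeping rather than the algebra: I must justify the two interchanges — moving $\frac{\mathrm{d}}{\mathrm{d}s}$ inside the conditional Laplace transform so that moments become derivatives of $\beta^{(j)}$, and moving it inside the double integral of (\ref{eq:Soft IAFPC theta^j}). Both rest on $\alpha>2$: this is precisely the condition under which $\int_{M}^{\infty}x^{1-\alpha}\,\mathrm{d}x$, and hence $\beta'^{(j)}(0)$ and $\mathbb{E}[I]$, are finite, and it also forces $\int_{M}^{\infty}x^{1-2\alpha}\,\mathrm{d}x<\infty$, so that $\beta''^{(j)}(0)$ and $\mathbb{E}[I^{2}]$ are finite. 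Given $\alpha>2$, $\partial_{s}\chi$ and $\partial_{s}^{2}\chi$ at $s=0$ are finite for every $(r,u)$ and grow at most polynomially in $(r,u)$, whereas the conditional density $f_{R_{\mathrm{MT}_{i}},U_{\mathrm{MT}_{i}}}(\cdot,\cdot\mid\mathcal{X}_{\mathrm{MT}_{i}}^{(k,n)})$ from Lemma~\ref{prop:Soft IAFPC f_RMT_UMT Cond Xjm} has Gaussian-type tails, so a dominated-convergence argument applied to the difference quotients of $\chi$ legitimises differentiating under the integral, and the same finiteness validates the moment--Laplace-transform identities. Once these points are settled, the conditioning step together with $\sum_{j\in\mathcal{K}}\Pr(\mathcal{X}_{\mathrm{MT}_{0}}^{(j)})=1$ completes the proof.
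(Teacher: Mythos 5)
Your proof follows the same route as the paper's (which is only a one-sentence sketch): condition on the serving tier, read the first two conditional moments off $\mathcal{L}_{I}(s\mid\mathcal{X}_{\mathrm{MT}_{0}}^{(j)})=\mathrm{e}^{\beta^{(j)}(s)}$ via derivatives at $s=0$, and obtain the Taylor coefficients of $\chi$ at $s=0$; your computation of $\partial_{s}\chi|_{s=0}$ and $\partial_{s}^{2}\chi|_{s=0}$ (using $\tfrac{ab}{c}=\tfrac{\alpha-2}{2(\alpha-1)}$, or equivalently the radial-integral form) is correct and reproduces the displayed $\beta'^{(j)}(0)$ and $\beta''^{(j)}(0)$ exactly. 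One point you should not paper over: your derivation gives
\begin{equation*}
\operatorname{var}(I)=\sum\limits_{j\in\mathcal{K}}\Pr\left(\mathcal{X}_{\mathrm{MT}_{0}}^{(j)}\right)\left(\beta''^{(j)}(0)+\left(\beta'^{(j)}(0)\right)^{2}-\left(\mathbb{E}[I]\right)^{2}\right),
\end{equation*}
i.e.\ \emph{without} the leading minus sign, whereas the proposition as printed carries an overall minus; since $\beta^{(j)}$ is a cumulant generating function, $\beta''^{(j)}(0)\ge 0$ and the printed right-hand side equals $-\operatorname{var}(I)\le 0$, so the statement contains a sign typo and your claim that the substitution ``yields the stated variance'' is not literally true. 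Flag the discrepancy rather than asserting agreement; everything else, including the dominated-convergence justification for differentiating under the integral when $\alpha>2$, is sound.
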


\begin{proof}
The proof consists on expressing the mean and variance conditioned on $\mathcal{X}_{\mathrm{MT}_{0}}^{(j)}$ and then obtaining the first and second derivatives of the Laplace transform of the interference evaluated at $s=0$ to obtain its moments. 
\end{proof}

Finally, the ccdf of the SINR can be obtained as appears in (\ref{eq:Soft IAFPC ccdf SINR}), 
\begin{figure*}[t]
\normalsize 
\begin{align}
\label{eq:Soft IAFPC ccdf SINR}
& \bar{F}_{\mathrm{SINR}}\left( \gamma  \right)=\sum\limits_{j\in \mathcal{K}}{\sum\limits_{m\in \mathcal{K}}{\Pr \left( \mathcal{X}_{\mathrm{MT}_{0}}^{(j,m)} \right)}}\int\limits_{v=0}^{\infty }{\int\limits_{w=\left( \frac{t^{(m)}}{t^{(j)}} \right)^{\frac{1}{\alpha }}v}^{\infty }{f_{R_{\mathrm{MT}_{0}},U_{\mathrm{MT}_{0}}}\left( v,w|\mathcal{X}_{\mathrm{MT}_{0}}^{(j,m)} \right)}}\mathrm{e}^{-\frac{\gamma \sigma _{n}^{2}\left( \tau v \right)^{\alpha }}{p_{\mathrm{MT}}\left( v,w \right)}}
\nonumber \\
& \mathcal{L}_{I}\left( \frac{\gamma \left( \tau v \right)^{\alpha }}{p_{\mathrm{MT}}\left( v,w \right)}|\mathcal{X}_{\mathrm{MT}_{0}}^{(j)} \right)\mathrm{d}v\mathrm{d}w
\end{align}
\hrulefill
\end{figure*}
where it has been applied the total probability theorem and performed expectation over the fading. 

The SE of the typical MT is expressed in bits per second per Hertz (bps/Hz) and represents how well the spectrum of the transmission of a randomly selected MT is exploited. Hence this metric is directly related to its SINR, which is given below using the well known Shannon formula $\mathrm{SE}_{\mathrm{MT}_{0}}=\log _{2}\left( 1+\mathrm{SINR}_{\mathrm{MT}_{0}} \right)$.
The ccdf of the SE of the typical MT can be expressed as 
\begin{align}
\label{eq:Shannon ccdf SE}
 \bar{F}_{\mathrm{SE}}\left( \xi  \right)= \sum\limits_{j\in \mathcal{K}}{\sum\limits_{m\in \mathcal{K}}{\Pr \left( \mathcal{X}_{\mathrm{MT}_{0}}^{(j,m)} \right)}} 
\bar{F}_{\mathrm{SINR}}\left( 2^{\xi }-1|\mathcal{X}_{\mathrm{MT}_{0}}^{(j,m)} \right) 
\end{align}
\noindent where we have applied the total probability theorem and used the inequality  
$ \mathrm{SINR}_{\mathrm{MT}_0} > 2^{\xi }-1$. From (\ref{eq:Shannon ccdf SE}) it can be obtained the average SE, 
$ \mathbb{E}_{\mathrm{SE}}\left[ \mathrm{SE} \right] $, using the fact that if X is a positive RV then 
$\mathbb{E}[X]= \int_{x>0} f_X(t) \mathrm{d}t$. 

It should be noticed that the ccdf of the SINR in (\ref{eq:Soft IAFPC ccdf SINR}) has four nested integrals since the Laplace transform of the interference given in (\ref{eq:Soft IAFPC theta^j}) has two nested integrals for IAFPC. Hence approximations of the Laplace transform of the interference in the form $\mathcal{L}_{I}\left( s|\mathcal{X}_{\mathrm{MT}_{0}}^{(j)} \right)\simeq \mathcal{L}_{{\hat{I}}}\left( s|\mathcal{X}_{\mathrm{MT}_{0}}^{(j)} \right)$ are proposed in next section where $\mathcal{L}_{{\hat{I}}}\left( s|\mathcal{X}_{\mathrm{MT}_{0}}^{(j)} \right)$ has a closed form expression. Additionally asymptotic analysis is performed in Section \ref{sec:Asymptotic Analysis}. This latter approach avoids the need of approximating the interference term since expressions are further simplified.

%
\section{Statistical Modeling of the interference}
\label{sec:Statistical Modeling of the interference}
In this section two approaches are proposed: (i) approximate the Laplace transform through a sigmoidal logistic function whose parameters are obtained by means of logistic regression and (ii) approximate the Laplace transform by a suitable function and then perform MM in order to obtain the function parameters. Those approaches will be used to approximate the Laplace transform of IAFPC in order to reduce the computational complexity. 

%
%
\subsection{Sigmoidal Approximation}
\label{sec:Sigmoidal approximation}
The Laplace transform of the interference given in (\ref{eq:Soft IAFPC theta^j}) has a S-shape if we represent the s-axis in dBs. Hence an approximation with a sigmoidal logistic function \cite{Costarelli14} is proposed as follows
\begin{equation}
\label{eq:Soft IAFPC logistic function 1}
\mathcal{L}_{I}\left( s^{(\mathrm{dB})}|\mathcal{X}_{\mathrm{MT}_{0}}^{(j)} \right)\simeq g_{I}\left( s^{(\mathrm{dB})} \right)=\frac{1}{1+e^{b_{0}\left( s^{(\mathrm{dB})}-s^{(\mathrm{dB})}_{0} \right)}}
\end{equation}
where $s^{(\mathrm{dB})}=10 \log_{10} (s)$ and $g_I (s^{(\mathrm{dB})})$ is a sigmoidal function with two parameters $b_0$ and $s^{(\mathrm{dB})}_0$ with the following properties
\ifOneColumn
\begin{align}
& \underset{s^{(\mathrm{dB})}\to -\infty }{\mathop{\lim }}\,g_{I}\left( s^{(\mathrm{dB})} \right)=1;\underset{s^{(\mathrm{dB})}\to +\infty }{\mathop{\lim }}\,g_{I}\left( s^{(\mathrm{dB})} \right)=0; 
& g_{I}\left( s^{(\mathrm{dB})}_{0} \right)=\frac{1}{2}; 
 \frac{\mathrm{d} g_{I}\left( s^{(\mathrm{dB})} \right)}{\mathrm{d}s^{(\mathrm{dB})}}  \Bigg|_{s^{(\mathrm{dB})}=s^{(\mathrm{dB})}_{0}} = b_{0} 
\end{align}
\else
\begin{align}
& \underset{s^{(\mathrm{dB})}\to -\infty }{\mathop{\lim }}\,g_{I}\left( s^{(\mathrm{dB})} \right)=1;\underset{s^{(\mathrm{dB})}\to +\infty }{\mathop{\lim }}\,g_{I}\left( s^{(\mathrm{dB})} \right)=0 
\nonumber \\ 
& g_{I}\left( s^{(\mathrm{dB})}_{0} \right)=\frac{1}{2}; \quad
 \frac{\mathrm{d}}{\mathrm{d}s^{(\mathrm{dB})}} g_{I}\left( s^{(\mathrm{dB})} \right) \Bigg|_{s^{(\mathrm{dB})}=s^{(\mathrm{dB})}_{0}} = b_{0} 
\end{align}
\fi

These two parameters can be easily obtained from the properties of the sigmoidal logistic function given above solving the following equation: 
$
\mathcal{L}_{I}\left( s^{(\mathrm{dB})}_{0}|\mathcal{X}_{\mathrm{MT}_{0}}^{(j)} \right)=1/2
$
, which gives $s^{(\mathrm{dB})}_0$ and then
$
b_{0}=-4 \mathrm{d/d}s^{(\mathrm{dB})}\mathcal{L}_{I}\left( s^{(\mathrm{dB})}|\mathcal{X}_{\mathrm{MT}_{0}}^{(j)} \right)|_{s^{(\mathrm{dB})}=s^{(\mathrm{dB})}_{0}}
$; however, there exist a rich literature advocated to obtain those parameters efficiently which is called logistic regression \cite{Hosmer13}. In order to do that, it is only necessary to evaluate the Laplace transform of the interference given by Lemma \ref{prop:Soft IAFPC LI} for a few sample values $\{s^{(\mathrm{dB})}_1,s^{(\mathrm{dB})}_2,\cdots,s^{(\mathrm{dB})}_n\}$ and then perform logistic regression, which is a built in function available in common mathematical software packages like Mathematica or MALAB. This allows to obtain quickly the parameters $s^{(\mathrm{dB})}_0$ and $b_0$  avoiding the need to solve the aforementioned equation.
Hence the process for numerical evaluation of (\ref{eq:Soft IAFPC ccdf SINR}) consists on the following steps: (i) evaluate (\ref{eq:Soft IAFPC LI}) for $n$ sample points (good results are obtained with $n \sim 8$), (ii) perform logistic regression over the sample points so as to obtain $b_0$ and $s^{(\mathrm{dB})}_0$ and (iii) evaluate (\ref{eq:Soft IAFPC ccdf SINR}).
Finally, using $s$ in linear scale the Laplace transform is approximated as
\begin{equation}
\mathcal{L}_{{\hat{I}}}\left( s|\mathcal{X}_{\mathrm{MT}_{0}}^{(j)} \right)=\frac{1}{1+\mathrm{e}^{b_{0}\left( 10\log _{10}\left( s \right)-s_{0}^{(\mathrm{dB})} \right)}}
\end{equation}

\subsection{Transformed Distribution Approach}
\label{sec:Transformed Distribution Approach}

Another approach, which we name Transformed Distribution Approach (TDA), is to approximate the Laplace transform of the interference by a suitable function $f(s,\underline{\theta}^{(j)})$ defined with $n$ parameters $\underline{\theta}^{(j)}=\{\theta^{(j)}_0,\cdots,\theta^{(j)}_{n-1}\}$. Such function will be the Laplace transform of a particular distribution and thus we perform MM of $n$ moments in order to obtain the parameters that define such distribution. The matching between the proposed function and the Laplace transform of the interference only needs to be accurate for $s \in \mathbb{R}^{+}$, since it is only evaluated for positive values in order to obtain the ccdf of the SINR. The Laplace transform of the interference satisfies these two conditions $\mathcal{L}_{I}\left( 0|\mathcal{X}_{\mathrm{MT}_{0}}^{(j)} \right)=1$ and 
$\mathcal{L}_{I}\left( \infty|\mathcal{X}_{\mathrm{MT}_{0}}^{(j)} \right)=0$. Hence suitable functions must satisfy the following conditions:
\begin{align}
\label{eq:TDA cond1}
& \underset{s\to 0}{\mathop{\lim }}\,f\left( s,\underset{\raise0.3em\hbox{$\smash{\scriptscriptstyle-}$}}{\theta }^{(j)} \right)=1 
; \quad \underset{s\to \infty }{\mathop{\lim }}\,f\left( s,\underset{\raise0.3em\hbox{$\smash{\scriptscriptstyle-}$}}{\theta }^{(j)} \right)=0 
\\ 
\label{eq:TDA cond2}
& 0< \Bigg|\frac{\mathrm{d}^{r}}{\mathrm{d}s^{r}}f\left( s,\underset{\raise0.3em\hbox{$\smash{\scriptscriptstyle-}$}}{\theta }^{(j)} \right)\Big|_{s=0}\Bigg|<\infty, \; r\in [1,n] 
\end{align}
where (\ref{eq:TDA cond1}) is necessary to have the Laplace transform of a pdf and (\ref{eq:TDA cond2}) is necessary to perform MM over $n$ moments\footnote{We have restricted to have finite moments for the $n$ first moments, however it is only necessary to have $n$ finite moments to perform MM.}, since moments are obtained from derivatives of the Laplace transform evaluated in 0. Hence we approximate the interference $I|\mathcal{X}^{(j)}_{\mathrm{MT}_0}$ to $\hat{I}|\mathcal{X}^{(j)}_{\mathrm{MT}_0}$ being $\mathcal{L}_{\hat{I}}\left( s|\mathcal{X}_{\mathrm{MT}_{0}}^{(j)} \right)=f(s,\underline{\theta}^{(j)})$.
Following this approach we propose two suitable functions to approximate the Laplace transform of the interference.
\subsubsection{Exponential function}
As a suitable function we propose the following function
\begin{equation}
\label{eq:TDA exp}
f\left( s,\theta _{0}^{(j)} \right)=\operatorname{e}^{-\theta _{0}^{(j)}s}\mathbf{1}\left( s\ge 0 \right)
\end{equation}
This function satisfies conditions given with (\ref{eq:TDA cond1}) and (\ref{eq:TDA cond2}). Performing MM yields 
$\theta _{0}^{(j)}=-\beta '^{(j)}\left( 0 \right)$. 
\subsubsection{Algebraic function}
The following function also satisfies the aforementioned conditions.
\begin{equation}
f\left( s,\theta _{0}^{(j)} \right)=\frac{1}{1+s\theta _{0}^{(j)}}
\end{equation}
Performing MM also yields 
$\theta _{0}^{(j)}=-\beta '^{(j)}\left( 0 \right)$. 

Notice that approximating the Laplace transform with the approaches proposed in this section leads (for the ccdf of the SINR) to two nested integrals instead of four, hence the reduction in computational complexity is considerable. 

\section{Asymptotic Analysis}
\label{sec:Asymptotic Analysis}
In this section the obtained expressions are evaluated when $i_0$ tends to $\infty$ and when $i_0$ is low. The former case is interesting since it represents the performance of non IA FCP. The latter case illustrate the trend as as the maximum allowed interference level $i_0$ becomes lower. 

\subsection{Non Interference Aware Power Control ($i_0 \rightarrow \infty$)}
\label{sec:Non Interference Aware Power Control}
From the expression of the transmitted power given in (\ref{eq:pMT Soft}) it can be observed that interference awareness is lost if $i_0$ tends to $\infty$. Hence obtaining the limits as $i_0 \rightarrow \infty$ on the previously calculated expressions yields the performance of non IA FPC. In this case the transmit power is expressed as 
$
p_{\mathrm{MT}}^{\left( i_{0}\to \infty  \right)}\left( R_{\mathrm{MT}_{0}} \right)=\min \left( p_{0}\left( \tau R_{\mathrm{MT}_{0}} \right)^{\alpha \varepsilon },p_{\mathrm{max} } \right)
$.

\begin{proposition}
\label{prop:non IA E(Pmt)}
The average transmitted power with non IA FPC is given as

\ifOneColumn
\begin{align}
\label{eq:non IA E(Pmt)}
 P_{\mathrm{MT}_{0}}^{\left( i_{0}\to \infty  \right)} = \underset{i_{0}\to \infty }{\mathop{\lim }}\,\mathbb{E}\left[ P_{\mathrm{MT}_{0}} \right] 
 =\sum\limits_{j\in \mathcal{K}}{\Pr \left( \mathcal{X}_{\mathrm{MT}_{0}}^{(j)} \right)}
{\mathbb{E}}_{R_{\mathrm{MT}_{0}}}
\left[ p_{\mathrm{MT}}\left( R_{\mathrm{MT}_{0}} \right)|\mathcal{X}_{\mathrm{MT}_{0}}^{(j)} \right]  
\end{align}
\else
\begin{align}
\label{eq:non IA E(Pmt)}
& P_{\mathrm{MT}_{0}}^{\left( i_{0}\to \infty  \right)} = \underset{i_{0}\to \infty }{\mathop{\lim }}\,\mathbb{E}\left[ P_{\mathrm{MT}_{0}} \right] 
\nonumber \\ 
& =\sum\limits_{j\in \mathcal{K}}{\Pr \left( \mathcal{X}_{\mathrm{MT}_{0}}^{(j)} \right)}
{\mathbb{E}}_{R_{\mathrm{MT}_{0}}}
\left[ p_{\mathrm{MT}}\left( R_{\mathrm{MT}_{0}} \right)|\mathcal{X}_{\mathrm{MT}_{0}}^{(j)} \right]  
\end{align}
being
\fi

\begin{equation}
\label{eq:non IA f_R}
f_{R_{\mathrm{MT}_{0}}}\left( v|\mathcal{X}_{\mathrm{MT}_{0}}^{(j)} \right)=\frac{f_{R_{\mathrm{MT}_{0},(1)}^{(j)}}\left( v \right)\cdot \bar{F}_{R_{\mathrm{MT}_{0},(1)}^{(\tilde{j})}}\left( \left( \frac{t^{(\tilde{j})}}{t^{(j)}} \right)^{\frac{1}{\alpha }}v \right)}{\Pr \left( \mathcal{X}_{\mathrm{MT}_{0}}^{(j)} \right)}
\end{equation}
\end{proposition}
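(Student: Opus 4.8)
The plan is to obtain the result as the $i_0 \to \infty$ limit of the general IAFPC average power expression, leaning on the structure already assembled in Lemmas \ref{prop:Soft IAFPC PrXj} and \ref{prop:Soft IAFPC f_RMT_UMT Cond Xjm}. First I would look at the transmit power in (\ref{eq:pMT Soft}): the only $i_0$-dependent term inside the $\min(\cdot)$ is $i_0(\tau U_{\mathrm{MT}_0})^\alpha$, which diverges as $i_0 \to \infty$ for every fixed $U_{\mathrm{MT}_0}>0$. Since $\{w > (t^{(m)}/t^{(j)})^{1/\alpha} v\}$ forces $w>0$ on the support of the conditional joint pdf, the pointwise limit of the $\min$ is exactly $p_{\mathrm{MT}}^{(i_0\to\infty)}(R_{\mathrm{MT}_0}) = \min\!\left(p_0(\tau R_{\mathrm{MT}_0})^{\alpha\epsilon}, p_{\max}\right)$, a function of $R_{\mathrm{MT}_0}$ alone and no longer of $U_{\mathrm{MT}_0}$.

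Next I would pass this limit through the expectation. Starting from
\[
\mathbb{E}[P_{\mathrm{MT}_0}] = \sum_{j\in\mathcal{K}}\sum_{m\in\mathcal{K}} \Pr\!\left(\mathcal{X}_{\mathrm{MT}_0}^{(j,m)}\right)\, \mathbb{E}_{R_{\mathrm{MT}_0},U_{\mathrm{MT}_0}}\!\left[p_{\mathrm{MT}}(R_{\mathrm{MT}_0},U_{\mathrm{MT}_0})\,\big|\,\mathcal{X}_{\mathrm{MT}_0}^{(j,m)}\right],
\]
I would justify exchanging $\lim_{i_0\to\infty}$ with the double sum (finite, so trivial) and with the integral against $f_{R_{\mathrm{MT}_0},U_{\mathrm{MT}_0}}(v,w\,|\,\mathcal{X}_{\mathrm{MT}_0}^{(j,m)})$. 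The integrand is monotonically nondecreasing in $i_0$ (the $\min$ can only grow), so monotone convergence applies directly, giving the limit as the integral of $p_{\mathrm{MT}}^{(i_0\to\infty)}(v)=\min(p_0(\tau v)^{\alpha\epsilon},p_{\max})$. Because this limit no longer depends on $w$, the inner integration over $w$ collapses: $\sum_m \Pr(\mathcal{X}_{\mathrm{MT}_0}^{(j,m)}) = \Pr(\mathcal{X}_{\mathrm{MT}_0}^{(j)})$ and $\sum_m \Pr(\mathcal{X}_{\mathrm{MT}_0}^{(j,m)}) \int_w f_{R_{\mathrm{MT}_0},U_{\mathrm{MT}_0}}(v,w|\mathcal{X}_{\mathrm{MT}_0}^{(j,m)})\,\mathrm{d}w$ marginalizes to $\Pr(\mathcal{X}_{\mathrm{MT}_0}^{(j)}) f_{R_{\mathrm{MT}_0}}(v|\mathcal{X}_{\mathrm{MT}_0}^{(j)})$, which recovers (\ref{eq:non IA E(Pmt)}).

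It then remains to verify the marginal pdf (\ref{eq:non IA f_R}). I would derive it directly rather than via the limit: conditioned on $\mathcal{X}_{\mathrm{MT}_0}^{(j)}$, the serving BS is the nearest tier-$j$ BS at distance $R^{(j)}_{\mathrm{MT}_0,(1)}=v$, and the association event is equivalent to all tier-$\tilde j$ BSs being farther than $(t^{(\tilde j)}/t^{(j)})^{1/\alpha} v$, an event of probability $\bar F_{R^{(\tilde j)}_{\mathrm{MT}_0,(1)}}\!\big((t^{(\tilde j)}/t^{(j)})^{1/\alpha} v\big)$, independent of the tier-$j$ process by independence of the PPPs. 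Bayes' rule with $\Pr(\mathcal{X}_{\mathrm{MT}_0}^{(j)})$ in the denominator then yields (\ref{eq:non IA f_R}); alternatively, integrating (\ref{eq:Soft IAFPC Soft IAFPC f_RMT_UMT Cond Xjm j neq m}) and (\ref{eq:Soft IAFPC Soft IAFPC f_RMT_UMT Cond Xjm j = m}) over $w$ and summing over $m$ gives the same thing, with $\zeta^{(j)}(v,w)$ and the $\max(\cdot)$ terms integrating out cleanly. The main obstacle is the interchange of limit and expectation in the second step; this is where I would be careful, but the monotonicity of the $\min$ in $i_0$ makes monotone convergence the clean tool, and one only needs $\mathbb{E}[P_{\mathrm{MT}_0}^{(i_0\to\infty)}]<\infty$, which holds because $p_{\mathrm{MT}}^{(i_0\to\infty)}\le p_{\max}$ is bounded.
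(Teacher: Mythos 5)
Your proposal is correct and follows essentially the same route as the paper: take the $i_0\to\infty$ limit so that $p_{\mathrm{MT}}$ depends on $R_{\mathrm{MT}_0}$ only, then collapse the sum over $m$ and the integral over $w$ via the total probability identity to obtain $\Pr(\mathcal{X}_{\mathrm{MT}_0}^{(j)})\,f_{R_{\mathrm{MT}_0}}(v|\mathcal{X}_{\mathrm{MT}_0}^{(j)})$. Your additions — the monotone-convergence justification for exchanging limit and expectation, and the direct Bayes derivation of the marginal pdf from the independence of the two tier PPPs — are sound refinements of steps the paper leaves implicit.
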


\begin{proof}
Taking the limit when $i_0 \to \infty$ the transmit power does not depend on $U_{\mathrm{MT}_0}$. Hence we have

\ifOneColumn
\begin{align}
& \mathbb{E}\left[ P_{\mathrm{MT}_{0}}^{\left( i_{0}\to \infty  \right)} \right]=\sum\limits_{j\in \mathcal{K}}{\int\limits_{v=0}^{\infty }{\sum\limits_{m\in \mathcal{K}}{\Pr \left( \mathcal{X}_{\mathrm{MT}_{0}}^{(j,m)} \right)}}} 
 \int\limits_{w=0}^{\infty }{p_{\mathrm{MT}}\left( R_{\mathrm{MT}_{0}} \right)f_{R_{\mathrm{MT}_{0}},U_{\mathrm{MT}_{0}}}\left( v,w|\mathcal{X}_{\mathrm{MT}_{0}}^{(j,m)} \right)}\cdot \mathrm{d}w 
\nonumber \\ 
& =\sum\limits_{j\in \mathcal{K}}{\Pr \left( \mathcal{X}_{\mathrm{MT}_{0}}^{(j)} \right)}\mathbb{E}_{R_{\mathrm{MT}_{0}}}\left[ p_{\mathrm{MT}}\left( R_{\mathrm{MT}_{0}} \right)|\mathcal{X}_{\mathrm{MT}_{0}}^{(j)} \right] 
\end{align}
\else
\begin{align}
& \mathbb{E}\left[ P_{\mathrm{MT}_{0}}^{\left( i_{0}\to \infty  \right)} \right]=\sum\limits_{j\in \mathcal{K}}{\int\limits_{v=0}^{\infty }{\sum\limits_{m\in \mathcal{K}}{\Pr \left( \mathcal{X}_{\mathrm{MT}_{0}}^{(j,m)} \right)}}} 
\nonumber \\ 
& \int\limits_{w=0}^{\infty }{p_{\mathrm{MT}}\left( R_{\mathrm{MT}_{0}} \right)f_{R_{\mathrm{MT}_{0}},U_{\mathrm{MT}_{0}}}\left( v,w|\mathcal{X}_{\mathrm{MT}_{0}}^{(j,m)} \right)}\cdot \mathrm{d}w 
\nonumber \\ 
& =\sum\limits_{j\in \mathcal{K}}{\Pr \left( \mathcal{X}_{\mathrm{MT}_{0}}^{(j)} \right)}\mathbb{E}_{R_{\mathrm{MT}_{0}}}\left[ p_{\mathrm{MT}}\left( R_{\mathrm{MT}_{0}} \right)|\mathcal{X}_{\mathrm{MT}_{0}}^{(j)} \right] 
\end{align}
\fi
\noindent where it has been used the following fact that comes after applying the total probability theorem.

\ifOneColumn
\begin{align}
\label{eq:marginal pdf}
& \sum\limits_{m\in \mathcal{K}}{\Pr \left( \mathcal{X}_{\mathrm{MT}_{0}}^{(j,m)} \right)}\int\limits_{w=0}^{\infty }{f_{R_{\mathrm{MT}_{0}},U_{\mathrm{MT}_{0}}}\left( v,w|\mathcal{X}_{\mathrm{MT}_{0}}^{(j,m)} \right)}\cdot \mathrm{d}w 
 =\Pr \left( \mathcal{X}_{\mathrm{MT}_{0}}^{(j)} \right)f_{R_{\mathrm{MT}_{0}}}\left( v|\mathcal{X}_{\mathrm{MT}_{0}}^{(j)} \right) 
\end{align}
\else
\begin{align}
\label{eq:marginal pdf}
& \sum\limits_{m\in \mathcal{K}}{\Pr \left( \mathcal{X}_{\mathrm{MT}_{0}}^{(j,m)} \right)}\int\limits_{w=0}^{\infty }{f_{R_{\mathrm{MT}_{0}},U_{\mathrm{MT}_{0}}}\left( v,w|\mathcal{X}_{\mathrm{MT}_{0}}^{(j,m)} \right)}\cdot \mathrm{d}w 
\nonumber \\ 
& =\Pr \left( \mathcal{X}_{\mathrm{MT}_{0}}^{(j)} \right)f_{R_{\mathrm{MT}_{0}}}\left( v|\mathcal{X}_{\mathrm{MT}_{0}}^{(j)} \right) 
\end{align}
\fi

Finally, solving the above expression after integrating over $w$ yields the marginal pdf given in (\ref{eq:non IA f_R}).
\end{proof}

\begin{lemma}
\label{prop:non IA LI}
The Laplace transform of the interference when $i_0 \to \infty$ is 

\ifOneColumn
\begin{align}
\label{eq:non IA}
& \mathcal{L}_{I}^{\left( i_{0}\to \infty  \right)}\left( s|\mathcal{X}_{\mathrm{MT}_{0}}^{(j)} \right)
= \exp \Big(-\sum\limits_{k\in \mathcal{K}}{2\pi \lambda ^{(k)}} 
 \times \int\limits_{r=0}^{\infty }{\chi ^{\left( i_{0}\to \infty  \right)}\left( s,r \right)}f_{R_{\mathrm{MT}_{i}}}\left( r|\mathcal{X}_{\mathrm{MT}_{i}}^{(k)} \right)\mathrm{d}r \Big) 
\end{align}
\else
\begin{align}
\label{eq:non IA}
& \mathcal{L}_{I}^{\left( i_{0}\to \infty  \right)}\left( s|\mathcal{X}_{\mathrm{MT}_{0}}^{(j)} \right)
= \exp \Big(-\sum\limits_{k\in \mathcal{K}}{2\pi \lambda ^{(k)}} 
\nonumber \\ 
& \times \int\limits_{r=0}^{\infty }{\chi ^{\left( i_{0}\to \infty  \right)}\left( s,r \right)}f_{R_{\mathrm{MT}_{i}}}\left( r|\mathcal{X}_{\mathrm{MT}_{i}}^{(k)} \right)\mathrm{d}r \Big) 
\end{align}
\fi
\noindent with 
\begin{align}
& \chi ^{\left( i_{0}\to \infty  \right)}\left( s,r \right)=\frac{r^{2-\alpha } s p_{\mathrm{MT}}^{\left( i_{0}\to \infty  \right)}\left( r \right)\tau ^{-\alpha }}{\alpha -2}\left( \frac{t^{(j)}}{t^{(k)}} \right)^{\frac{2-\alpha }{\alpha }}\, 
\nonumber \\ 
& _{2}F_{1}\left( 1,\frac{\alpha -2}{\alpha };2-\frac{2}{\alpha };-\frac{sp_{\mathrm{MT}}^{\left( i_{0}\to \infty  \right)}\left( r \right)\tau ^{-\alpha }}{r}\left( \frac{t^{(j)}}{t^{(k)}} \right)^{-1} \right) 
\end{align}

\end{lemma}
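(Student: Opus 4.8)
The plan is to read off $\mathcal{L}_I^{(i_0\to\infty)}$ directly as the $i_0\to\infty$ limit of the general expression $\mathcal{L}_I(s|\mathcal{X}_{\mathrm{MT}_0}^{(j)})=\exp(\beta^{(j)}(s))$ from Lemma \ref{prop:Soft IAFPC LI}, using continuity of $i_0\mapsto\beta^{(j)}(s)$. First I would observe, straight from (\ref{eq:pMT Soft}), that for every fixed $(r,u)$ the middle term of the minimum diverges, so $p_{\mathrm{MT}}(r,u)=\min\!\big(p_0(\tau r)^{\alpha\epsilon},i_0(\tau u)^{\alpha},p_{\mathrm{max}}\big)\to\min\!\big(p_0(\tau r)^{\alpha\epsilon},p_{\mathrm{max}}\big)=p_{\mathrm{MT}}^{(i_0\to\infty)}(r)$; in particular the limiting transmit power no longer depends on $U_{\mathrm{MT}_0}$, i.e.\ the interference-awareness constraint is deactivated.

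Next I would pass to the limit inside $\chi(s,r,u)$. Besides $p_{\mathrm{MT}}$ itself, the only $i_0$-dependence in $\chi$ sits in the term $\tfrac{1}{\tau}\big(p_{\mathrm{MT}}(r,u)/i_0\big)^{1/\alpha}$, which is precisely the lower cut-off distance from the probe BS induced by the thinning event $\mathcal{Z}_{\mathrm{MT}_i}$ in (\ref{eq:ZMTi}). As $i_0\to\infty$ this quantity vanishes, so $\mathrm{max}\big((t^{(j)}/t^{(k)})^{1/\alpha}r,\ \tfrac{1}{\tau}(p_{\mathrm{MT}}(r,u)/i_0)^{1/\alpha}\big)\to(t^{(j)}/t^{(k)})^{1/\alpha}r$. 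Substituting this value and $p_{\mathrm{MT}}^{(i_0\to\infty)}(r)$ into $\chi$, and pulling out the powers using $\big((t^{(j)}/t^{(k)})^{1/\alpha}r\big)^{2-\alpha}=r^{2-\alpha}(t^{(j)}/t^{(k)})^{(2-\alpha)/\alpha}$ together with the corresponding identity for the argument of the hypergeometric function, produces the stated $\chi^{(i_0\to\infty)}(s,r)$, which depends neither on $u$ nor on the index $n$.

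Because the limiting integrand is $u$-free, the inner $u$-integral in $\beta^{(j)}(s)$ collapses to a marginal density, $\int_{u=(t^{(n)}/t^{(k)})^{1/\alpha}r}^{\infty}f_{R_{\mathrm{MT}_i},U_{\mathrm{MT}_i}}(r,u|\mathcal{X}_{\mathrm{MT}_i}^{(k,n)})\,\mathrm{d}u=f_{R_{\mathrm{MT}_i}}(r|\mathcal{X}_{\mathrm{MT}_i}^{(k,n)})$, and then the law of total probability conditioned on $\mathcal{X}_{\mathrm{MT}_i}^{(k)}$ — the very identity already used in (\ref{eq:marginal pdf}) — gives $\sum_{n\in\mathcal{K}}\Pr(\mathcal{Q}_{\mathrm{MT}_i}^{(n)}|\mathcal{X}_{\mathrm{MT}_i}^{(k)})\,f_{R_{\mathrm{MT}_i}}(r|\mathcal{X}_{\mathrm{MT}_i}^{(k,n)})=f_{R_{\mathrm{MT}_i}}(r|\mathcal{X}_{\mathrm{MT}_i}^{(k)})$, with the marginal pdf given by (\ref{eq:non IA f_R}). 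Collecting terms turns $\lim_{i_0\to\infty}\beta^{(j)}(s)$ into $-\sum_{k\in\mathcal{K}}2\pi\lambda^{(k)}\int_{r=0}^{\infty}\chi^{(i_0\to\infty)}(s,r)\,f_{R_{\mathrm{MT}_i}}(r|\mathcal{X}_{\mathrm{MT}_i}^{(k)})\,\mathrm{d}r$, and exponentiating yields (\ref{eq:non IA}).

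The one non-routine point is justifying that the limit can be moved inside the double integral defining $\beta^{(j)}(s)$: the $\mathrm{max}$ cut-off decreases monotonically towards $(t^{(j)}/t^{(k)})^{1/\alpha}r$ as $i_0$ grows, and $p_{\mathrm{MT}}(r,u)\le\min\!\big(p_0(\tau r)^{\alpha\epsilon},p_{\mathrm{max}}\big)$ is an $i_0$-independent bound, so (for $\alpha>2$) $|\chi(s,r,u)|$ is dominated uniformly in $i_0$ by an integrable function against the conditional density; dominated convergence then legitimizes the interchange. I expect this integrability/domination check — rather than the algebra — to be the main obstacle, together with bookkeeping of the $t^{(j)}/t^{(k)}$ powers when simplifying $\chi$ to $\chi^{(i_0\to\infty)}$.
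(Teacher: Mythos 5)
Your proposal is correct and follows essentially the same route as the paper: take the $i_0\to\infty$ limit inside $\beta^{(j)}(s)$, observe that the $\max$ cut-off collapses to $(t^{(j)}/t^{(k)})^{1/\alpha}r$ and the power law loses its $u$-dependence, so the inner $u$-integral and the sum over $n$ reduce to the marginal pdf $f_{R_{\mathrm{MT}_i}}(r|\mathcal{X}_{\mathrm{MT}_i}^{(k)})$ via the identity in (\ref{eq:marginal pdf}). Your dominated-convergence justification for interchanging limit and integral is a welcome addition the paper leaves implicit.
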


\begin{proof}
The limit of the Laplace transform can be expressed as 
$
\mathcal{L}_{I}^{\left( i_{0}\to \infty  \right)}\left( s|\mathcal{X}_{\mathrm{MT}_{0}}^{(j)} \right)=\mathrm{e}^{ \left( \beta ^{\left( i_{0}\to \infty  \right)(j)}\left( s \right) \right)}
$. Hence the limit of $\beta ^{(j)}\left( s \right)$ appears below

\begin{align}
& \beta ^{\left( i_{0}\to \infty  \right)(j)}\left( s \right)
=-\sum\limits_{k\in \mathcal{K}}{2\pi \lambda ^{(k)}}\times \int\limits_{r=0}^{\infty }{\chi ^{\left( i_{0}\to \infty  \right)}\left( s,r \right)} 
\nonumber \\ 
& \sum\limits_{n\in \mathcal{K}}{\frac{\Pr \left( \mathcal{X}_{\mathrm{MT}_{i}}^{(k,n)} \right)}{\Pr \left( \mathcal{X}_{\mathrm{MT}_{i}}^{(k)} \right)}}\int\limits_{u=\left( \frac{t^{(n)}}{t^{(k)}} \right)^{\frac{1}{\alpha }}r}^{\infty }{f_{R_{\mathrm{MT}_{i}},U_{\mathrm{MT}_{i}}}\left( r,u|\mathcal{X}_{\mathrm{MT}_{i}}^{(k,n)} \right)}\mathrm{d}r\mathrm{d}u 
\end{align}
\noindent where it has been used again the fact given in (\ref{eq:marginal pdf}).
\end{proof}
The ccdf of the SINR can be obtained taking the limit from (\ref{eq:Soft IAFPC ccdf SINR}). 

\ifOneColumn
\begin{align}
& \bar{F}_{\text{SINR}}^{\left( i_{0}\to \infty  \right)}\left( \gamma  \right)=\,\sum\limits_{j\in \mathcal{K}}{\int\limits_{v=0}^{\infty }{e^{-\frac{\gamma \sigma _{n}^{2}\left( \tau v \right)^{\alpha }}{p_{\text{MT}}\left( v \right)}}}f_{R_{\text{MT}_{0}}}\left( v|\mathcal{X}_{\text{MT}_{0}}^{(j)} \right)} 
 \exp \Bigg(-\sum\limits_{k\in \mathcal{K}}{2\pi \lambda ^{(k)}}\int\limits_{r=0}^{\infty }{f_{R_{\text{MT}_{i}}}\left( r|\mathcal{X}_{\text{MT}_{i}}^{(k)} \right)} 
\nonumber \\ 
 & \chi ^{\left( i_{0}\to \infty  \right)}\left( \frac{\gamma \sigma _{n}^{2}\left( \tau v \right)^{\alpha }}{p_{\text{MT}}\left( v \right)},r \right)\text{d}r \Bigg)\cdot \text{d}v 
\end{align}
\else
\begin{align}
& \bar{F}_{\text{SINR}}^{\left( i_{0}\to \infty  \right)}\left( \gamma  \right)=\,\sum\limits_{j\in \mathcal{K}}{\int\limits_{v=0}^{\infty }{e^{-\frac{\gamma \sigma _{n}^{2}\left( \tau v \right)^{\alpha }}{p_{\text{MT}}\left( v \right)}}}f_{R_{\text{MT}_{0}}}\left( v|\mathcal{X}_{\text{MT}_{0}}^{(j)} \right)} 
\nonumber \\ 
 & \exp \Bigg(-\sum\limits_{k\in \mathcal{K}}{2\pi \lambda ^{(k)}}\int\limits_{r=0}^{\infty }{f_{R_{\text{MT}_{i}}}\left( r|\mathcal{X}_{\text{MT}_{i}}^{(k)} \right)} 
\nonumber \\ 
 & \chi ^{\left( i_{0}\to \infty  \right)}\left( \frac{\gamma \sigma _{n}^{2}\left( \tau v \right)^{\alpha }}{p_{\text{MT}}\left( v \right)},r \right)\text{d}r \Bigg)\cdot \text{d}v 
\end{align}
\fi
\noindent where it has been used the fact that when $i_0 \to \infty$ the Laplace transform and transmit power law do not depend on $w$ and then the result has been integrated and summed over $w$ and $k$, respectively. It can be observed that the ccdf of the SINR has been simplified to a two fold integral for the non IA case.

\subsection{Low $i_0$ regime ($i_0 \sim 0$)}
\label{sec:Low i_0 regime}
In this section we obtain expressions that are valid as $i_0$ is low. Following the notation proposed in Section \ref{sec:Introduction} in the low $i_0$ regime the transmit power can be expressed as
$
p_{\mathrm{MT}}^{\left( i_{0} \sim 0 \right)}\left( U_{\mathrm{MT}_{0}} \right)=i_{0}\left( \tau U_{\mathrm{MT}_{0}} \right)^{\alpha }
$
since 
$
\underset{i_{0}\to 0}{\mathop{\lim }}\,\frac{p_{\mathrm{MT}}\left( R_{\mathrm{MT}_{0}},U_{\mathrm{MT}_{0}} \right)}{i_{0}\left( \tau U_{\mathrm{MT}_{0}} \right)^{\alpha }}=1
$.

\subsubsection{Average Transmit Power}

The average transmit power in the low $i_0$ regime can be expressed as follows

\begin{equation}
\mathbb{E}\left[ P_{\mathrm{MT}_{0}}^{\left( i_{0} \sim 0 \right)} \right]=\sum\limits_{m\in \mathcal{K}}{\Pr \left( \mathcal{Q}_{\mathrm{MT}_{0}}^{(m)} \right)}\mathbb{E}_{U_{\mathrm{MT}_{0}}}\left[ i_{0}\left( \tau U_{\mathrm{MT}_{0}} \right)^{\alpha }|\mathcal{Q}_{\mathrm{MT}_{0}}^{(m)} \right]
\end{equation}
\noindent since in this case the transmit power only depends on $U_{\mathrm{MT}_{0}}$. This marginal pdf can be obtained from (\ref{prop:Soft IAFPC f_RMT_UMT Cond Xjm}) as

\ifOneColumn
\begin{align}
& f_{U_{\mathrm{MT}_{0}}}\left( w|\mathcal{Q}_{\mathrm{MT}_{0}}^{(m)} \right)= 
 \frac{\sum\limits_{j\in \mathcal{K}}{\Pr \left( \mathcal{X}_{\mathrm{MT}_{0}}^{(j,m)} \right)}}{\Pr \left( \mathcal{Q}_{\mathrm{MT}_{0}}^{(m)} \right)}\int\limits_{v=0}^{\infty }{f_{R_{\mathrm{MT}_{0}},U_{\mathrm{MT}_{0}}}\left( v,w|\mathcal{X}_{\mathrm{MT}_{0}}^{(j,m)} \right)}\cdot \mathrm{d}v 
\end{align}
\else
\begin{align}
  & f_{U_{\mathrm{MT}_{0}}}\left( w|\mathcal{Q}_{\mathrm{MT}_{0}}^{(m)} \right)= 
  \nonumber \\
 & \frac{\sum\limits_{j\in \mathcal{K}}{\Pr \left( \mathcal{X}_{\mathrm{MT}_{0}}^{(j,m)} \right)}}{\Pr \left( \mathcal{Q}_{\mathrm{MT}_{0}}^{(m)} \right)}\int\limits_{v=0}^{\infty }{f_{R_{\mathrm{MT}_{0}},U_{\mathrm{MT}_{0}}}\left( v,w|\mathcal{X}_{\mathrm{MT}_{0}}^{(j,m)} \right)}\cdot \mathrm{d}v 
\end{align}
\fi

\emph{Special case: min path loss}. In case of minimum path loss association criteria all the association weights $t^{(j)}$ are equal to 1. Minimum path loss association is equivalent to single tier (since in our model we have the same path loss per tier) \cite{Singh15} (Corollary 6) where the density of single tier BSs is 
$\lambda =\sum\limits_{j\in \mathcal{K}}{\lambda ^{(j)}}$. With minimum path loss association criteria the distance towards the serving BS is the distance to the nearest BS and the distance to the most interfered BS is the distance to the second nearest BS. Hence we have the following equivalence
$
f_{R_{\mathrm{MT}_{0}},U_{\mathrm{MT}_{0}}}\left( v,w \right) =
4\left( \pi  \lambda  \right)^{2} v  w  \mathrm{e}^{-\pi \lambda w^{2}} \mathbf{1} (v < w)
$. 
In this case the average transmit power can be further simplified to

\begin{equation}
\mathbb{E}\left[ P_{\mathrm{MT}_{0}}^{\left( i_{0}\sim 0 \right)} \right]=i_{0}\left( \frac{\tau }{\sqrt{\pi \lambda }} \right)^{\alpha }\Gamma \left( 2+\frac{\alpha }{2} \right)
\end{equation}

\subsubsection{Laplace Transform of the Interference}
The Laplace transform of the interference can be expressed as

\ifOneColumn
\begin{align}
& \mathcal{L}_{I}^{\left( i_{0} \sim 0 \right)}\left( s|\mathcal{X}_{\mathrm{MT}_{0}}^{(j)} \right) 
  = \exp \Bigg( -\sum\limits_{k\in \mathcal{K}}{2\pi \lambda ^{(k)}\sum\limits_{n\in \mathcal{K}}{\Pr \left( \mathcal{Q}_{\mathrm{MT}_{i}}^{(n)}|\mathcal{X}_{\mathrm{MT}_{i}}^{(k)} \right)}}  
\int\limits_{r=0}^{\infty }\int\limits_{u=\left( \frac{t^{(n)}}{t^{(k)}} \right)^{\frac{1}{\alpha }}r}^{\infty }
\nonumber  \\ 
& f_{R_{\mathrm{MT}_{i}},U_{\mathrm{MT}_{i}}}\left( r,u|\mathcal{X}_{\mathrm{MT}_{i}}^{(k,n)} \right)
\chi ^{\left( i_{0} \sim 0 \right)}\left( s,r,u \right) \mathrm{d}r \mathrm{d}u \Bigg) 
\end{align}
\else
\begin{align}
& \mathcal{L}_{I}^{\left( i_{0} \sim 0 \right)}\left( s|\mathcal{X}_{\mathrm{MT}_{0}}^{(j)} \right) 
\nonumber  \\ 
&  = \exp \Bigg( -\sum\limits_{k\in \mathcal{K}}{2\pi \lambda ^{(k)}\sum\limits_{n\in \mathcal{K}}{\Pr \left( \mathcal{Q}_{\mathrm{MT}_{i}}^{(n)}|\mathcal{X}_{\mathrm{MT}_{i}}^{(k)} \right)}}  
\int\limits_{r=0}^{\infty }\int\limits_{u=\left( \frac{t^{(n)}}{t^{(k)}} \right)^{\frac{1}{\alpha }}r}^{\infty }
\nonumber  \\ 
&f_{R_{\mathrm{MT}_{i}},U_{\mathrm{MT}_{i}}}\left( r,u|\mathcal{X}_{\mathrm{MT}_{i}}^{(k,n)} \right)
\chi ^{\left( i_{0} \sim 0 \right)}\left( s,r,u \right) \mathrm{d}r \mathrm{d}u \Bigg) 
\end{align}
\fi
\noindent where the term $\chi ^{\left( i_{0} \sim 0 \right)}\left( s,r,u \right)$ can be written as
\begin{align}
\label{eq:Low i0 general association LTI}
& \chi ^{\left( i_{0} \sim 0 \right)}\left( s,r,u \right)=\frac{si_{0}\left( \tau u \right)^{\alpha }\tau ^{-\alpha }}{\alpha -2}\mathrm{max} ^{2-\alpha }\left( \left( \frac{t^{(j)}}{t^{(k)}} \right)^{\frac{1}{\alpha }}r,u \right) 
\nonumber \\ 
& \,_{2}F_{1}\left( 1,\frac{\alpha -2}{\alpha };2-\frac{2}{\alpha };\frac{-si_{0}\left( \tau u \right)^{\alpha }\tau ^{-\alpha }}{\mathrm{max}^{\alpha }\left( \left( \frac{t^{(j)}}{t^{(k)}} \right)^{\frac{1}{\alpha }}r,u \right)} \right) 
\end{align}

\emph{Special case: min path loss}. In case of minimum path loss association the term $\chi ^{\left( i_{0} \sim 0 \right)}\left( s,r,u \right)$ can be written as

\ifOneColumn
\begin{align}
\label{eq:Low i0 general association chi}
& \chi ^{\left( i_{0} \sim 0 \right)}\left( s,r,u \right)
 =\frac{\mathbf{1}\left( r\le u \right) si_{0}u^{2}}{\alpha -2}
 \,_{2}F_{1}\left( 1,\frac{\alpha -2}{\alpha };2-\frac{2}{\alpha };
 -si_{0} \right) + \,\frac{\mathbf{1}\left( r>u \right) si_{0}u^{\alpha } r^{2-\alpha }}{\alpha -2}
 \nonumber \\
& \,
 _{2}F_{1}\left( 1,\frac{\alpha -2}{\alpha };2-\frac{2}{\alpha };-si_{0}\left( \frac{u}{r} \right)
 ^{\alpha } \right) 
\end{align}
\else
\begin{align}
\label{eq:Low i0 general association chi}
& \chi ^{\left( i_{0} \sim 0 \right)}\left( s,r,u \right)
 =\frac{\mathbf{1}\left( r\le u \right) si_{0}u^{2}}{\alpha -2}
 \nonumber \\
& \,_{2}F_{1}\left( 1,\frac{\alpha -2}{\alpha };2-\frac{2}{\alpha };
 -si_{0} \right) + \,\frac{\mathbf{1}\left( r>u \right) si_{0}u^{\alpha } r^{2-\alpha }}{\alpha -2}
 \nonumber \\
& \,
 _{2}F_{1}\left( 1,\frac{\alpha -2}{\alpha };2-\frac{2}{\alpha };-si_{0}\left( \frac{u}{r} \right)
 ^{\alpha } \right) 
\end{align}
\fi
\noindent where we have particularized to equal association weights in (\ref{eq:Low i0 general association chi}) and then expressed the $\max(\cdot)$ function as the sum of two indicators functions. 
From the integration limits in (\ref{eq:Low i0 general association LTI}) it can be observed that with minimum path loss association the second term is canceled out. Finally integrating the resulting expression along $r$ and $u$ in (\ref{eq:Low i0 general association LTI}) yields

\begin{equation}
\label{eq:Low i0 min path LTI}
\mathcal{L}_{I}^{\left( i_{0} \sim 0 \right)}\left( s \right)=\exp \left( -\frac{4si_{0}}{\alpha -2}_{2}F_{1}\left( 1,\frac{\alpha -2}{\alpha };2-\frac{2}{\alpha };-si_{0} \right) \right)
\end{equation}
\noindent which is a closed form expression. 
\begin{remark}
Interestingly, it can be observed from (\ref{eq:Low i0 min path LTI}) that the Laplace transform of the interference in the low $i_0$ regime does not depend on the BS density. Hence all statistics of the interference like mean and variance do not depend on  $\lambda$.
\end{remark}
Using the above expression the ccdf of the SINR for the minimum path loss association case can be expressed as follows

\ifOneColumn
\begin{align}
\label{eq:Low i0 min path ccdf SINR}
& \bar{F}_{\text{SINR}}^{\left( i_{0}\sim 0 \right)}\left( \gamma  \right) = \int\limits_{v=0}^{\infty }{\int\limits_{w=v}^{\infty }{f_{R_{\text{MT}_{0}},U_{\text{MT}_{0}}}\left( v,w \right)}}\cdot \text{e}^{-\frac{\gamma \sigma _{n}^{2}}{i_{0}}\left( \frac{v}{w} \right)^{\alpha }} 
  \exp \Bigg(-_{2}F_{1}\left( 1,\frac{\alpha -2}{\alpha };2-\frac{2}{\alpha };-\gamma \left( \frac{v}{w} \right)^{\alpha } \right) 
\nonumber \\ 
& \frac{4\gamma }{\alpha -2}\left( \frac{v}{w} \right)^{\alpha } \Bigg)\text{d}v\text{d}w 
\end{align}
\else
\begin{align}
\label{eq:Low i0 min path ccdf SINR}
& \bar{F}_{\text{SINR}}^{\left( i_{0}\sim 0 \right)}\left( \gamma  \right) = \int\limits_{v=0}^{\infty }{\int\limits_{w=v}^{\infty }{f_{R_{\text{MT}_{0}},U_{\text{MT}_{0}}}\left( v,w \right)}}\cdot \text{e}^{-\frac{\gamma \sigma _{n}^{2}}{i_{0}}\left( \frac{v}{w} \right)^{\alpha }} 
\nonumber \\ 
 & \exp \Bigg(-_{2}F_{1}\left( 1,\frac{\alpha -2}{\alpha };2-\frac{2}{\alpha };-\gamma \left( \frac{v}{w} \right)^{\alpha } \right) 
\nonumber \\ 
 & \frac{4\gamma }{\alpha -2}\left( \frac{v}{w} \right)^{\alpha } \Bigg)\text{d}v\text{d}w 
\end{align}
\fi

\begin{remark}
By inspection of (\ref{eq:Low i0 min path ccdf SINR}) it can be noticed that although the Laplace transform of the interference does not depend on $\lambda$ the ccdf of the SINR does depend on $\lambda$ thanks to the joint pdf of distances. Interestingly, the dependence with $i_0$ only appears in the term 
$
\text{e}^{-\frac{\gamma \sigma _{n}^{2}}{i_{0}}\left( \frac{v}{w} \right)^{\alpha }}
$. Since this term disappears in the ideal no noise case ($\sigma _{n}^{2}=0$), in such case the dependence with $i_0$ is removed. 
\end{remark}

\section{Numerical Results}
\label{sec:Numerical Results}

\begin{table}
\renewcommand{\arraystretch}{1.3}
\caption{Simulation Parameters}
\label{tab:Simulation Parameters}
\centering
\begin{tabular}{ c c c c }
\toprule
Parameter & Value & Parameter & Value \\
\toprule
$f_c$ (MHz) & $2 \times 10^{3}$ & $h_{\mathrm{BS}}$ (m) & $10$ \\
\hline
$b_w$ (MHz) & $9$ & $t^{(1)}/t^{(2)}$ (dB) & $9$ \\
\hline
$\lambda^{(1)}$ (points/m$^2$) & $2 \times 10^{-6}$ & $\lambda^{(2)}$ (points/m$^2$) & $4 \times 10^{-6}$ \\
\hline
$\lambda_{\mathrm{MT}}$ (points/m$^2$) & $80 \times 10^{-6}$ & $n_\mathrm{thermal}$ (dBm/Hz) & $-174$  \\
\hline
$n_\mathrm{F}$ (dB) & $9$ & $\sigma_s$ (dB) & $4$  \\
\hline
$p_0$ (dBm) & $-70$ & $p_\mathrm{max}$ (dBm) & $\{\infty,5 \}$  \\
\hline
$i_0$ (dBm) & $[-120,-60 ]$ & $\epsilon$ & $[0,1]$  \\
\bottomrule
\end{tabular}
\end{table}

In this section theoretical expressions are evaluated numerically and compared with simulation results. It is considered a thermal noise power spectral density of $n_\mathrm{thermal}=-174$ dBm/Hz and a noise figure at the receiver of $n_\mathrm{F}=9$ dB. Main parameters are presented in Table \ref{tab:Simulation Parameters} unless otherwise stated.
Monte Carlo simulations are carried out to compare with theoretical results. 

Each simulation consists on $10^{4}$ spatial realizations that are averaged in order to obtain performance results. In simulation we consider the actual PP of interfering MT locations.

\subsection{Average Transmitted Power, Mean and Variance of the Interference}
\label{sec:E(P), E(I) and var(I)}

\begin{figure}[t]
\centering
\includegraphics[width=3in]{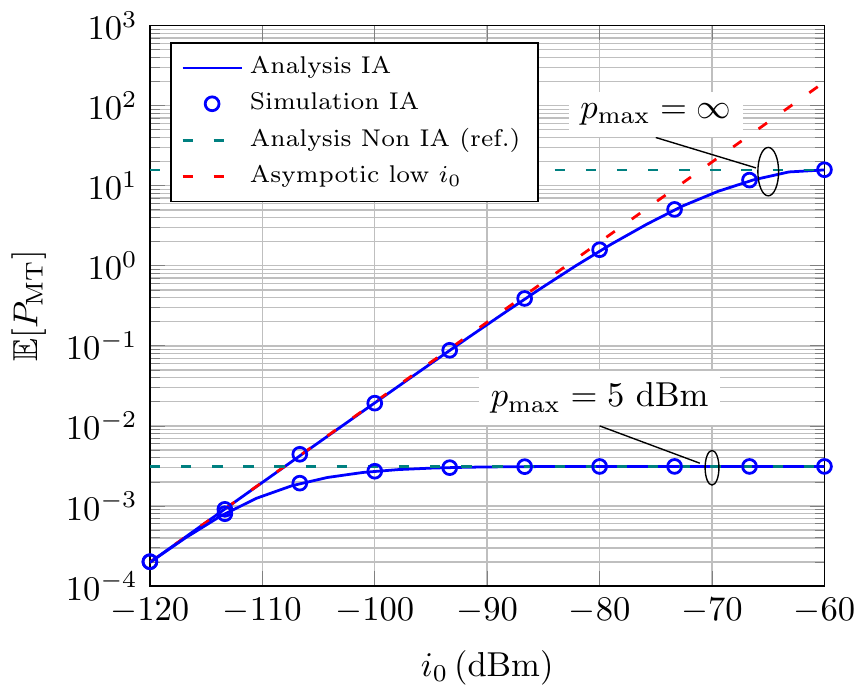}
\caption{Average transmitted power versus $i_0$ for IAFPC and non IA with $p_\mathrm{max} \rightarrow \infty$ and $p_\mathrm{max} = 5$ dBm.}
\label{fig:avP}
\end{figure}
%

\begin{figure*}[t]
\centering
\includegraphics[width=6.5in]{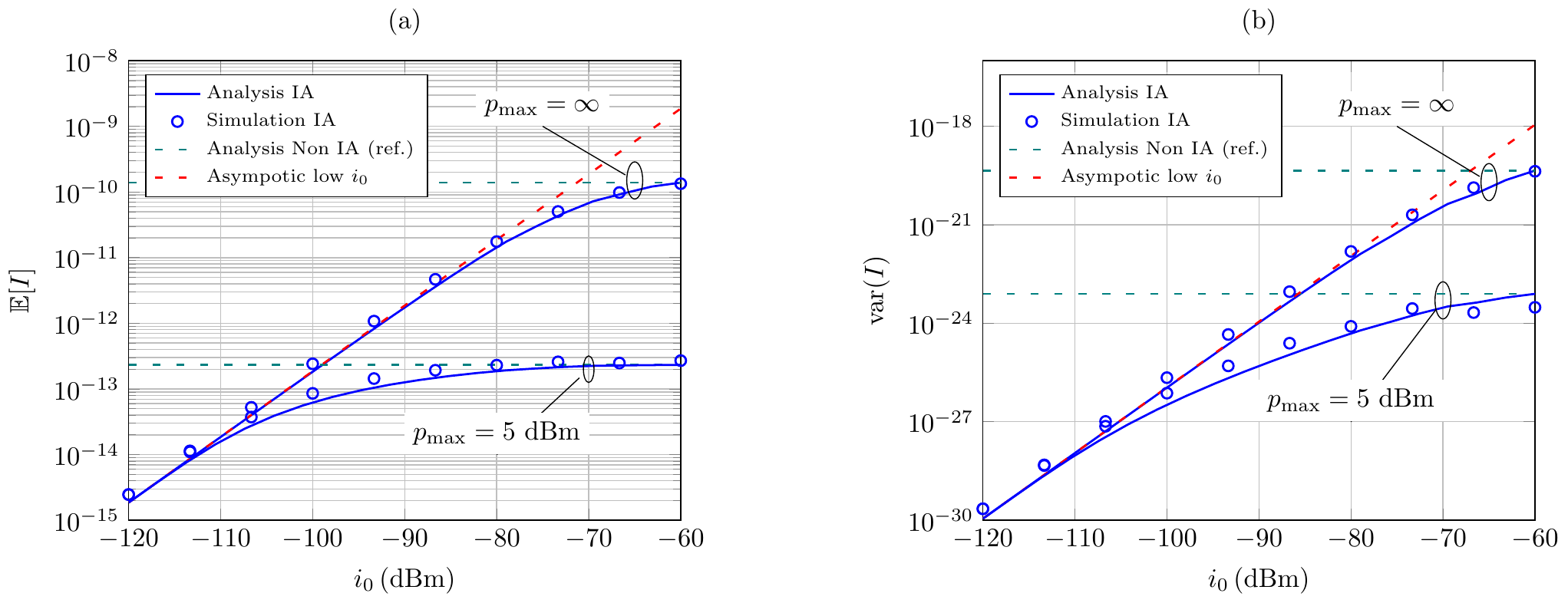}
\caption{(a) Mean of the interference versus $i_0$ for IAFPC and non IA with $p_\mathrm{max} \rightarrow \infty$ and $p_\mathrm{max} = 5$ dBm. (b) Variance of the interference versus $i_0$ for IAFPC and non IA with $p_\mathrm{max} \rightarrow \infty$ and $p_\mathrm{max} = 5$ dBm.}
\label{fig:avI}
\end{figure*}
%



In this section the average transmitted power, average interference and variance of the interference is obtained and compared with simulations. It is important to remark that the interfering MTs positions in the UL does not follow a PPP \cite{Singh15} and hence the interference term given by (\ref{eq:I}) is an approximation that aims to capture some correlations among the interfering MT positions and the positions of the probe BS. Therefore, the analysis of metrics related to interference (like mean and variance of the interference or ccdf of the SINR) represents an approximation. However, theoretical results that do not involve the interference (like the average transmitted power) are exact. This can be observed from Fig. \ref{fig:avP} and Fig. \ref{fig:avI} (a)-(b).

In Fig. \ref{fig:avP} the average transmitted power versus $i_0$ for IAFPC and non IA methods is presented considering $p_\mathrm{max} \rightarrow \infty$ and $p_\mathrm{max} = 5$ dBm. Since $i_0$ does not exit in non IA FPC, the value obtained with this method is actually a reference value which is drawn to facilitate the comparison. It can be observed that both IA and non IA methods provide the same average transmitted power for $i_0=-60$ dBm. This is reasonable since when $i_0$ is high enough almost all MTs do not truncate their transmissions due to $i_0$. It can be also contrasted that results from asymptotic analysis (red curve) are only accurate for low values of $i_0$ as it was expected. The same holds for the mean and variance of the interference in Fig. \ref{fig:avI} (a) and (b).
The variance of the interference degrades the estimation of the SINR,  thus affecting negatively to the performance of AMC in real implementations \cite{Zhang12}. Hence, methods that reduce the variance of the interference are appealing. 
From figures presented in this section it can be observed that IAFPC reduce both the mean and variance of the interference and the average transmitted power compared to non IA FPC which is highly beneficial. 

\subsection{Approximating the Interference in IAFPC}
\label{sec:Approximating the interference in IAFPC}

\begin{figure}[t]
\centering
\includegraphics[width=3in]{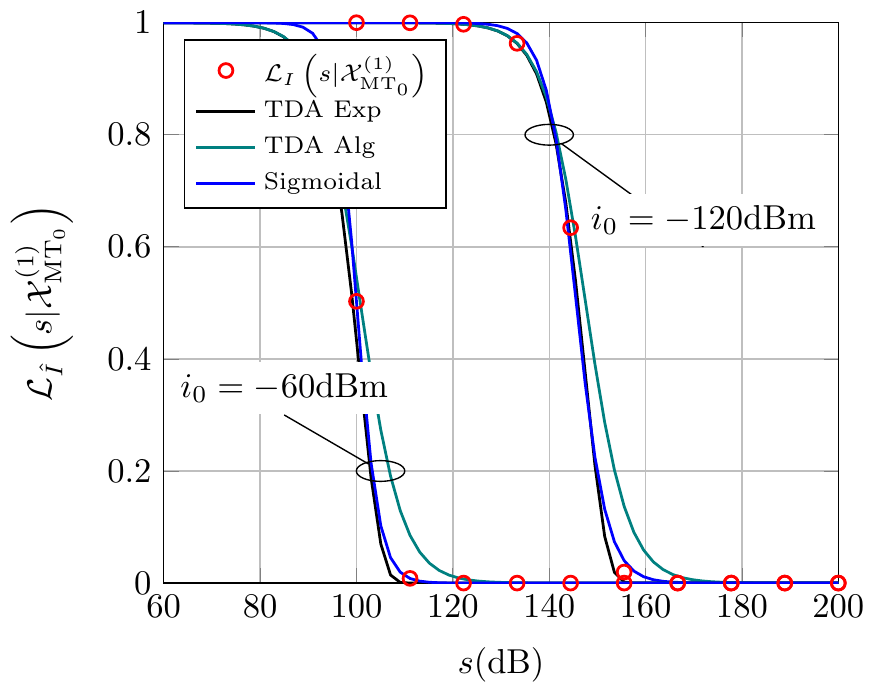}
\caption{Laplace transform of the analytic interference term $I$ given by (\ref{eq:I}) conditioned on $\mathcal{X}^{(1)}_{\mathrm{MT}_0}$ for IAFPC and its approximation by $\hat{I}$.}
\label{fig:LTI_X1}
\end{figure}

As it has been explained in Sections \ref{sec:System model} and \ref{sec:Statistical Modeling of the interference}, the exact interference term is intractable since the positions of interfering MTs do not follow a PPP and thus an approximation to the interference is proposed with (\ref{eq:I}). Although this analytic interference term $I$ leads to tractable expressions, for IAFPC the numeric complexity of obtained expressions require further approximations. Hence, in Section \ref{sec:Statistical Modeling of the interference} two approaches to approximate the interference term $I$ by $\hat{I}$ were proposed. The aim of this section is to compare such approaches in terms of the Laplace transform of its pdf. 

Fig. \ref{fig:LTI_X1} illustrates the Laplace transform of $\hat{I}$ with Sigmoidal  and TDA approximations conditioned on being the probe MT associated to tier 1. The s-axis is expressed in dBs and hence it is possible to observe the s-shape of the Laplace transform. The logistic regression for Sigmoidal approximation has considered 8 equally spaced points between $s=60$ dBs and $s=200$ dBs. It can be noticed that the difference in the Laplace transform of the interference between $i_0=-60$ dBm and $i_0=-120$ dBm is high and such interference tend to reach higher values for $i_0=-60$ dBm. All approximations follow the same trend as the Laplace transform of the analytic interference $I$. 

\subsection{ccdf of the SINR}
\label{sec:ccdf of the SINR}

\begin{figure*}[t]
\centering
\includegraphics[width=6.5in]{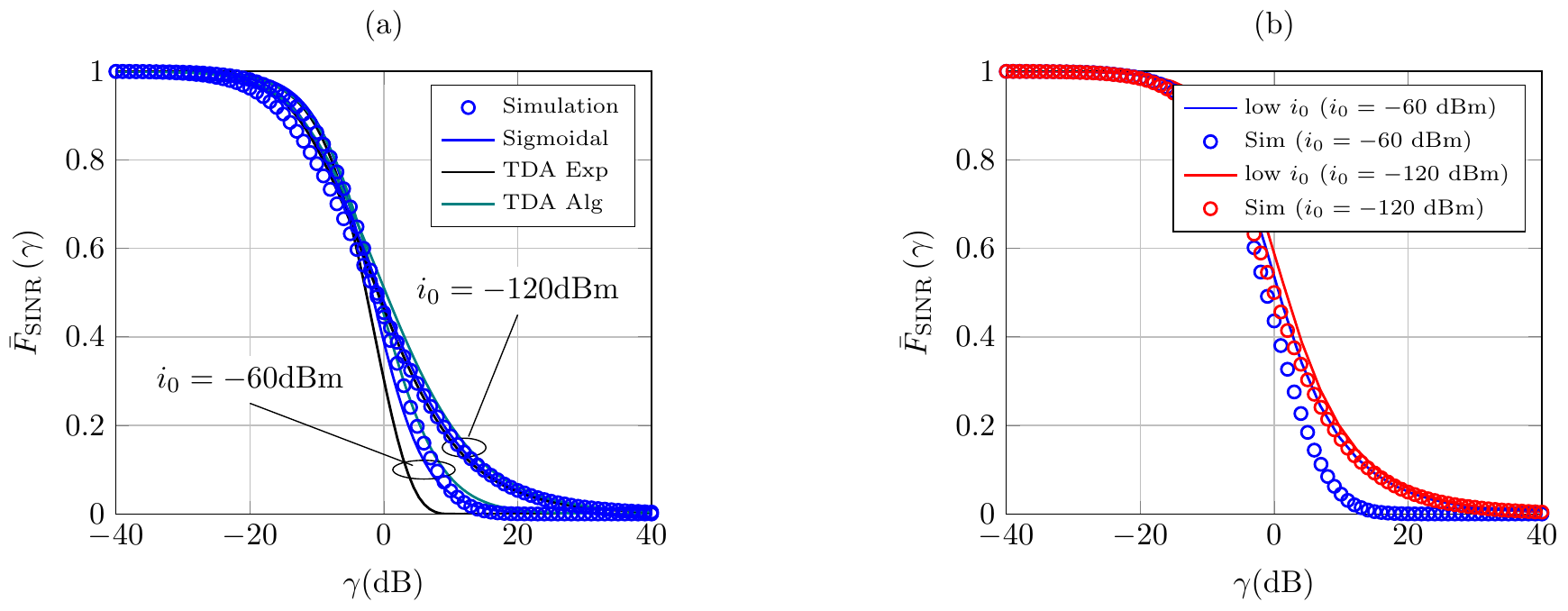}
\caption{(a) ccdf of the SINR for the typical MT using IAFPC method with $\epsilon=1$ and $i_0=\{-120, -60\}$ dBm. (b) ccdf of the SINR in the low $i_0$ regime given by (\ref{eq:Low i0 min path ccdf SINR}) using minimum path loss association.}
\label{fig:comparativa_sia_epsilon1}
\end{figure*}


%
%
%

This section illustrates the ccdf of the SINR IAFPC and compares analytic and simulation results considering $p_\mathrm{max} \rightarrow \infty$. Fig. \ref{fig:comparativa_sia_epsilon1} (a) illustrates the ccdf of the SINR associated with IAFPC with $\epsilon=1$. 
It can be observed that Sigmoidal and TDA with algebraic function represents a better approximation to ccdf of the SINR obtained by simulations than approximation with TDA using an exponential function for $i_0=-120$ and $i_0=-60$ dBm. 

Finally, Fig. \ref{fig:comparativa_sia_epsilon1} (b) illustrates a comparison between the ccdf of the SINR in the low $i_0$ regime given by (\ref{eq:Low i0 min path ccdf SINR}) and simulation results using minimum path loss association. Simulation is carried out for $\epsilon=1$. It can be observed a good math between simulation and asymptotic analysis for $i_0=-120$ dBm; however this matching is severely reduced for $i_0=-60$ dBm as it was expected. 

\subsection{Spectral Efficiency}
\label{sec:Spectral Efficiency}

\begin{figure*}[t]
\centering
\includegraphics[width=6.5in]{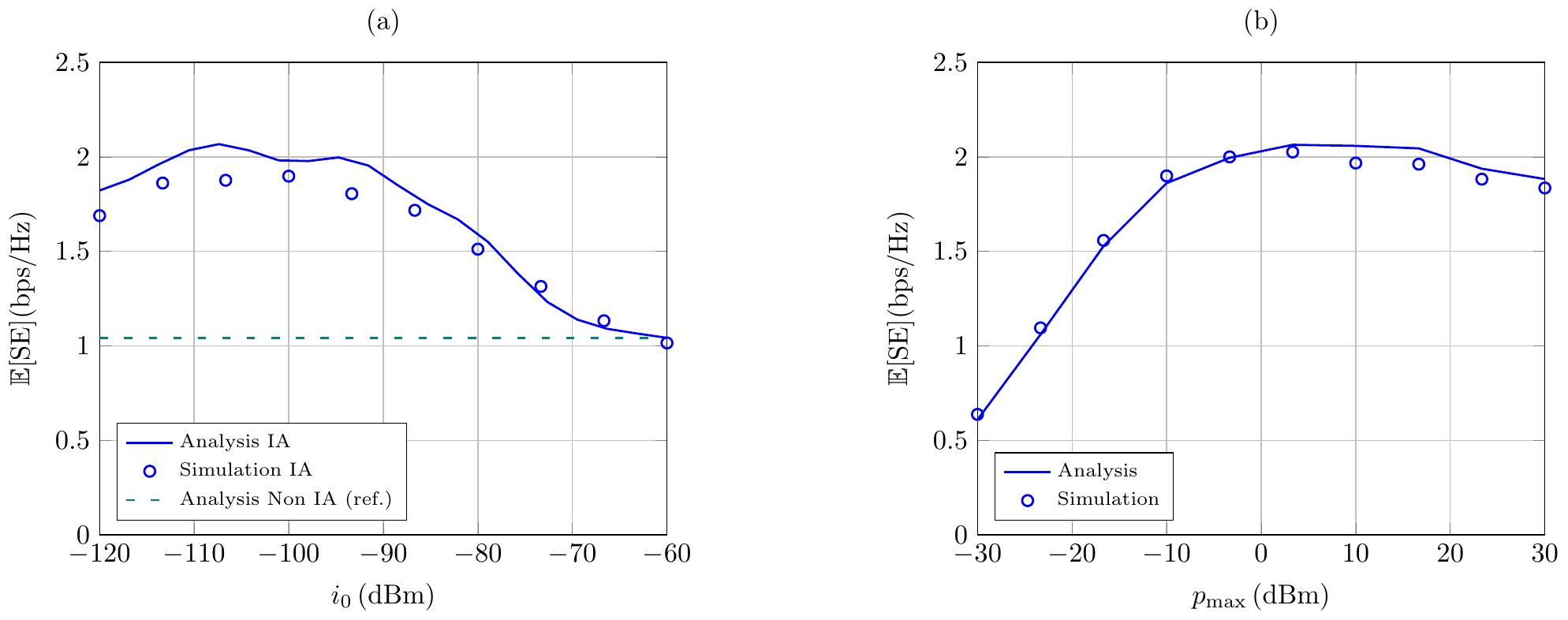}
\caption{(a) Average SE versus $i_0$ for IAFPC with $\epsilon=1$ and $p_{\mathrm{max}} \rightarrow \infty$. (b) Average SE versus $p_{\mathrm{max}}$ for IAFPC with $\epsilon=1$ and $i_0 = 90$ dBm.}
\label{fig:avSE_i0}
\end{figure*}


%

%
%

In this section the average SE is compared for IA and non IA methods. Such metric represents how well a MT's transmission exploits the bandwidth and it is expressed in terms of bits per second per Hertz (bps/Hz). For the IAFPC case, a Sigmoidal approximation of the interference is considered in this section with $8$ equally spaced sample points in $s^\mathrm{(dB)}$ between $60$ dB and $200$ dB.

Fig. \ref{fig:avSE_i0} (a) illustrates the average SE versus $i_0$ for IA and non IA methods. It is observed that IA outperforms non IA also in terms of average SE. The reason behind that is that only MT's transmissions that cause strong interference are truncated by $i_0$ while the rest of transmissions are not truncated, allowing for those MTs to reach a good desired power. It can be noticed that there exist an optimal $i_0$ value. 

\begin{figure}[t]
\centering
\includegraphics[width=3in]{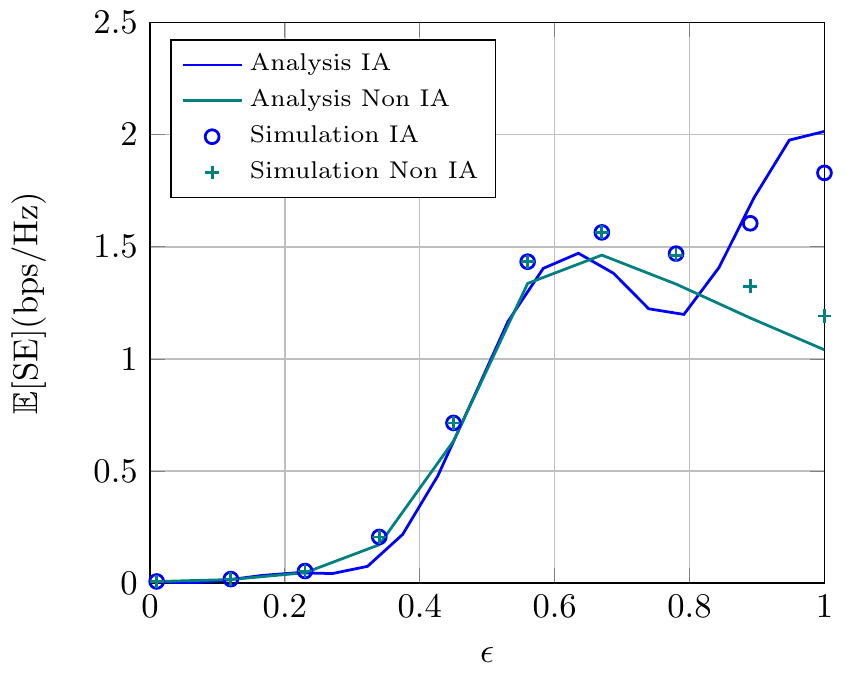}
\caption{Average SE versus $\epsilon$ for IAFPC with $i_0 = 90$ dBm and non IA.}
\label{fig:avSE_epsilon}
\end{figure}

Fig. \ref{fig:avSE_i0} (b) illustrates the average SE versus $p_\mathrm{max}$. It can be observed that reducing $p_\mathrm{max}$ can cause a loss in performance since MTs cannot compensate their path losses and system becomes noise limited. 

Finally, Fig. \ref{fig:avSE_epsilon} illustrates the average SE versus $\epsilon$ for IA and non IA methods. It can be observed that with non IA there exist an optimal value between 0 and 1 that maximizes the SE. This is due to the fact that a total compensation of the path loss and shadowing ($\epsilon=1$) causes strong interference to neighboring cells. Hence a partial compensation obtains a balance between interference and desired power that maximizes the average SE. It is interesting to note that in case of IA the maximum SE is reached with total compensation of path loss. This is due to the fact that IAFPC reaches a good balance between interference and desired power by means of the maximum allowed interference level $i_0$. This parameter offers another degree of freedom to smartly reduce the interference and increase the average SE which has a gain around $33\%$. 


\section{Conclusions}
\label{sec:Conclusion}
In this work a general framework to analyze IAFPC for the UL of HCNs is presented. For the sake of mathematical tractability, a PPP assumption over the positions of interfering MTs need to be considered. Although this brings mathematical tractability, it does not consider existing correlations between BSs and interfering MT positions which leads to inaccurate results. In order to improve the PPP assumption, we propose the use of two indicator functions over the set of interfering positions that add some necessary correlations with the probe BS and also with the interference level $i_0$. By using this framework we obtain a wide variety of performance metrics including average transmitted power, mean and variance of the interference and average SE. 

Aiming to reduce numerical complexity of expressions related to IAFPC analysis, two approaches are considered: (i) approximate the analytic interference term and (ii) perform asymptotic analysis.
The proposed approximations of the interference are : Sigmoidal approximation, and TDA. These approximations lead to considerable reduction on numerical complexity while keeping accurate results. On the other hand, asymptotic analysis allows to obtain the performance of non IA and also to obtain insights for the low $i_0$ regime. 
Finally, with this framework IA and non IA methods have been compared. Performance results show that IAFPC method leads to smaller transmitted power, smaller mean and variance of the interference and 
higher average SE than non IA.
\vspace{-4mm}

\appendices

\section{Proof of Proposition \ref{prop:Soft IAFPC PrXj}} 
\label{app:Soft IAFPC PrXj}
For the case $j \neq m$ the probability of $\mathcal{X}_{\mathrm{MT}_{0}}^{(j,m)}$ can be written as 

\ifOneColumn
\begin{align}
&  \Pr  \left( \mathcal{X}_{\mathrm{MT}_{0}}^{(j,m)} \right)
 =\mathbb{E}_{R_{\mathrm{MT}_{0},(1)}^{(j)},R_{\mathrm{MT}_{0},(2)}^{(j)}} \Bigg[
  \mathbf{1}\left( R_{\mathrm{MT}_{0},(2)}^{(j)}>\left( \frac{t^{(m)}}{t^{(j)}} \right)^{\frac{1}{\alpha }}R_{\mathrm{MT}_{0},(1)}^{(j)} \right) 
\nonumber \\ 
 & \left( \bar{F}_{R_{\mathrm{MT}_{0},(1)}^{(m)}}\left( R_{\mathrm{MT}_{0},(2)}^{(j)} \right)-\bar{F}_{R_{\mathrm{MT}_{0},(1)}^{(m)}}\left( \left( \frac{t^{(m)}}{t^{(j)}} \right)^{\frac{1}{\alpha }}R_{\mathrm{MT}_{0},(1)}^{(j)} \right) \right)\Bigg] 
\end{align}
\else
\begin{align}
&  \Pr  \left( \mathcal{X}_{\mathrm{MT}_{0}}^{(j,m)} \right)
 =\mathbb{E}_{R_{\mathrm{MT}_{0},(1)}^{(j)},R_{\mathrm{MT}_{0},(2)}^{(j)}} \Bigg[
 \nonumber \\
&  \mathbf{1}\left( R_{\mathrm{MT}_{0},(2)}^{(j)}>\left( \frac{t^{(m)}}{t^{(j)}} \right)^{\frac{1}{\alpha }}R_{\mathrm{MT}_{0},(1)}^{(j)} \right) 
\nonumber \\ 
 & \left( \bar{F}_{R_{\mathrm{MT}_{0},(1)}^{(m)}}\left( R_{\mathrm{MT}_{0},(2)}^{(j)} \right)-\bar{F}_{R_{\mathrm{MT}_{0},(1)}^{(m)}}\left( \left( \frac{t^{(m)}}{t^{(j)}} \right)^{\frac{1}{\alpha }}R_{\mathrm{MT}_{0},(1)}^{(j)} \right) \right)\Bigg] 
\end{align}
\fi

where it has been used the fact that $R_{\mathrm{MT}_{0},(1)}^{(m)}$ is independent of $R_{\mathrm{MT}_{0},(1)}^{(j)}$ and $R_{\mathrm{MT}_{0},(2)}^{(j)}$. Then performing expectation using the joint pdf of $R_{\mathrm{MT}_{0},(1)},R_{\mathrm{MT}_{0},(2)}$ given in (\ref{eq:joint pdf R1, R2}) completes the proof. Analogously for the case $j=m$ we have

\ifOneColumn
\begin{align}
\Pr  & \left( \mathcal{X}_{\mathrm{MT}_{0}}^{(j,j)} \right)
 = \mathbb{E}_{R_{\mathrm{MT}_{0},(1)}^{(j)},R_{\mathrm{MT}_{0},(2)}^{(j)}}\Bigg[ 
 \bar{F}_{R_{\mathrm{MT}_{0},(1)}^{(\tilde{j})}}\left( \mathrm{max} \left( \left( \frac{t^{(\tilde{j})}}{t^{(j)}} \right)^{\frac{1}{\alpha }}R_{\mathrm{MT}_{0},(1)}^{(j)},R_{\mathrm{MT}_{0},(2)}^{(j)} \right) \right)\Bigg]  
\end{align}
\else
\begin{align}
\Pr  & \left( \mathcal{X}_{\mathrm{MT}_{0}}^{(j,j)} \right)
 = \mathbb{E}_{R_{\mathrm{MT}_{0},(1)}^{(j)},R_{\mathrm{MT}_{0},(2)}^{(j)}}\Bigg[ 
\nonumber \\ 
& \bar{F}_{R_{\mathrm{MT}_{0},(1)}^{(\tilde{j})}}\left( \mathrm{max} \left( \left( \frac{t^{(\tilde{j})}}{t^{(j)}} \right)^{\frac{1}{\alpha }}R_{\mathrm{MT}_{0},(1)}^{(j)},R_{\mathrm{MT}_{0},(2)}^{(j)} \right) \right)\Bigg]  
\end{align}
\fi

The integral form of the expectation across $R_{\mathrm{MT}_{0},(1)}^{(j)},R_{\mathrm{MT}_{0},(2)}^{(j)}$ of the resulting expression also has primitive which completes the proof.

\section{Proof of Proposition \ref{prop:Soft IAFPC f_RMT_UMT Cond Xjm}} 
\label{app:Soft IAFPC f_RMT_UMT Cond Xjm}
For the case $j \neq m$ the joint pdf of the distances towards the serving BS and most interfered BS can be expressed as

\ifOneColumn
\begin{align}
& f_{R_{\mathrm{MT}_{0}},U_{\mathrm{MT}_{0}}}\left( v,w|\mathcal{X}_{\mathrm{MT}_{0}}^{(j,m)} \right)= 
\frac{\mathrm{d}^{2}}{\mathrm{d}v\mathrm{dw}}\frac{\Pr \left( R_{\mathrm{MT}_{0},(1)}^{(j)}\le v,R_{\mathrm{MT}_{0},(1)}^{(m)}\le w,\mathcal{X}_{\mathrm{MT}_{0}}^{(j,m)} \right)}{\Pr \left( \mathcal{X}_{\mathrm{MT}_{0}}^{(j,m)},\mathcal{A}_{\mathrm{MT}_{0}} \right)} 
\end{align}
\else
\begin{align}
& f_{R_{\mathrm{MT}_{0}},U_{\mathrm{MT}_{0}}}\left( v,w|\mathcal{X}_{\mathrm{MT}_{0}}^{(j,m)} \right)= \nonumber \\ 
& \quad \frac{\mathrm{d}^{2}}{\mathrm{d}v\mathrm{dw}}\frac{\Pr \left( R_{\mathrm{MT}_{0},(1)}^{(j)}\le v,R_{\mathrm{MT}_{0},(1)}^{(m)}\le w,\mathcal{X}_{\mathrm{MT}_{0}}^{(j,m)} \right)}{\Pr \left( \mathcal{X}_{\mathrm{MT}_{0}}^{(j,m)},\mathcal{A}_{\mathrm{MT}_{0}} \right)} 
\end{align}
\fi

The numerator of the previous expression can be obtained as

\ifOneColumn
\begin{align}
& \Pr \left( R_{\mathrm{MT}_{0},(1)}^{(j)}\le v,R_{\mathrm{MT}_{0},(1)}^{(m)}\le w,\mathcal{X}_{\mathrm{MT}_{0}}^{(j,m)} \right) 
=\mathbb{E}_{R_{\mathrm{MT}_{0},(1)}^{(j)},R_{\mathrm{MT}_{0},(2)}^{(j)}}\mathbb{E}_{
R_{\mathrm{MT}_{0},(1)}^{(m)}} \Bigg[ 
\nonumber \\ 
& \quad \mathbf{1}\left( R_{\mathrm{MT}_{0},(1)}^{(m)}>\left( \frac{t^{(m)}}{t^{(j)}} \right)^{
\frac{1}{\alpha }}R_{\mathrm{MT}_{0},(1)}^{(j)} \right) 
\mathbf{1}\left( R_{\mathrm{MT}_{0},(1)}^{(m)}<R_{\mathrm{MT}_{0},(2)}^{(j)} \right) 
\mathbf{1}\left( R_{\mathrm{MT}_{0},(1)}^{(j)}\le v \right)\mathbf{1}\left( R_{\mathrm{MT}_{0},(1)}^{(m)}\le w \right)\Bigg] 
\nonumber \\ 
& \quad =\int\limits_{r_{1}^{(m)}=0}^{w}{\int\limits_{r_{1}^{(j)}=0}^{v}{\int\limits_{r_{2}^{(j)}=_{1}^{(j)}}^{\infty }{f_{R_{\mathrm{MT}_{0},(1)}^{(j)},R_{\mathrm{MT}_{0},(2)}^{(j)}}\left( r_{1}^{(j)},r_{2}^{(j)} \right)}}} 
f_{R_{\mathrm{MT}_{0},(1)}^{(m)}}\left( r_{1}^{(m)} \right)\mathbf{1}\left( R_{\mathrm{MT}_{0},(1)}^{(m)}<R_{\mathrm{MT}_{0},(2)}^{(j)} \right) 
\nonumber  \\ 
& \quad \mathbf{1}\left( r_{2}^{(j)}>\left( \frac{t^{(m)}}{t^{(j)}} \right)^{\frac{1}{\alpha }}r_{1}^{(j)} \right)\mathrm{d}r_{2}^{(j)}\mathrm{d}r_{1}^{(j)}\mathrm{d}r_{1}^{(m)}  
\end{align}
\else
\begin{align}
& \Pr \left( R_{\mathrm{MT}_{0},(1)}^{(j)}\le v,R_{\mathrm{MT}_{0},(1)}^{(m)}\le w,\mathcal{X}_{\mathrm{MT}_{0}}^{(j,m)} \right) 
\nonumber \\ 
& \quad =\mathbb{E}_{R_{\mathrm{MT}_{0},(1)}^{(j)},R_{\mathrm{MT}_{0},(2)}^{(j)}}\mathbb{E}_{R_{\mathrm{MT}_{0},(1)}^{(m)}} \Bigg[ 
\nonumber \\ 
& \quad \mathbf{1}\left( R_{\mathrm{MT}_{0},(1)}^{(m)}>\left( \frac{t^{(m)}}{t^{(j)}} \right)^{\frac{1}{\alpha }}R_{\mathrm{MT}_{0},(1)}^{(j)} \right) 
\nonumber \\ 
& \quad \mathbf{1}\left( R_{\mathrm{MT}_{0},(1)}^{(m)}<R_{\mathrm{MT}_{0},(2)}^{(j)} \right) 
\nonumber \\ 
& \quad \mathbf{1}\left( R_{\mathrm{MT}_{0},(1)}^{(j)}\le v \right)\mathbf{1}\left( R_{\mathrm{MT}_{0},(1)}^{(m)}\le w \right)\Bigg] 
\nonumber \\ 
& \quad =\int\limits_{r_{1}^{(m)}=0}^{w}{\int\limits_{r_{1}^{(j)}=0}^{v}{\int\limits_{r_{2}^{(j)}=_{1}^{(j)}}^{\infty }{f_{R_{\mathrm{MT}_{0},(1)}^{(j)},R_{\mathrm{MT}_{0},(2)}^{(j)}}\left( r_{1}^{(j)},r_{2}^{(j)} \right)}}} 
\nonumber \\ 
& \quad f_{R_{\mathrm{MT}_{0},(1)}^{(m)}}\left( r_{1}^{(m)} \right)\mathbf{1}\left( R_{\mathrm{MT}_{0},(1)}^{(m)}<R_{\mathrm{MT}_{0},(2)}^{(j)} \right) 
\nonumber  \\ 
& \quad \mathbf{1}\left( r_{2}^{(j)}>\left( \frac{t^{(m)}}{t^{(j)}} \right)^{\frac{1}{\alpha }}r_{1}^{(j)} \right)\mathrm{d}r_{2}^{(j)}\mathrm{d}r_{1}^{(j)}\mathrm{d}r_{1}^{(m)}  
\end{align}
\fi

Performing the inner integral with $r^{(j)}_{2}$ and then applying Leibniz integration rule completes the proof. Analogously for the case $j=m$ the numerator can be expressed as

\ifOneColumn
\begin{align}
& \int\limits_{r_{1}^{(j)}=0}^{v}{\int\limits_{r_{2}^{(j)}=_{1}^{(j)}}^{w}{f_{R_{\mathrm{MT}_{0},(1)}^{(j)},R_{\mathrm{MT}_{0},(2)}^{(j)}}\left( r_{1}^{(j)},r_{2}^{(j)} \right)}} 
\bar{F}_{R_{\mathrm{MT}_{0},(1)}^{(\tilde{j})}}\left( \mathrm{max} \left( \left( \frac{t^{(\tilde{j})}}{t^{(j)}} \right)^{\frac{1}{\alpha }}r_{1}^{(j)},r_{2}^{(j)} \right) \right)\mathrm{d}r_{2}^{(j)}\mathrm{d}r_{1}^{(j)}  
\end{align}
\else
\begin{align}
& \int\limits_{r_{1}^{(j)}=0}^{v}{\int\limits_{r_{2}^{(j)}=_{1}^{(j)}}^{w}{f_{R_{\mathrm{MT}_{0},(1)}^{(j)},R_{\mathrm{MT}_{0},(2)}^{(j)}}\left( r_{1}^{(j)},r_{2}^{(j)} \right)}} 
\nonumber \\ 
& \quad \bar{F}_{R_{\mathrm{MT}_{0},(1)}^{(\tilde{j})}}\left( \mathrm{max} \left( \left( \frac{t^{(\tilde{j})}}{t^{(j)}} \right)^{\frac{1}{\alpha }}r_{1}^{(j)},r_{2}^{(j)} \right) \right)\mathrm{d}r_{2}^{(j)}\mathrm{d}r_{1}^{(j)}  
\end{align}
\fi

Finally applying Leibniz integration rule completes the proof.

\section{Proof of Proposition \ref{prop:Soft IAFPC LI}} 
\label{app:Soft IAFPC LI}
The Laplace transform of the interference can be expressed as

\ifOneColumn
\begin{align}
\mathcal{L}_{I} & \left( s|\mathcal{X}_{\mathrm{MT}_{0}}^{(j)} \right)=\mathbb{E}_{I}\left[ \mathrm{e}^{-sI}|\mathcal{X}_{\mathrm{MT}_{0}}^{(j)} \right]
=\prod\limits_{k\in \mathcal{K}}{\mathbb{E}_{\Psi ^{(k)}}} 
 \prod\limits_{\mathrm{MT}_{i}\in \Psi ^{(k)}}{\mathbb{E}_{R_{\mathrm{MT}_{i}},U_{\mathrm{MT}_{i}}}}
\Bigg[\mathbb{E}_{H_{\mathrm{MT}_{i}}}\exp \Bigg(-sH_{\mathrm{MT}_{i}}\left( \tau D_{\mathrm{MT}_{i}} \right)^{-\alpha } 
\nonumber \\ 
& p_{\mathrm{MT}}\left( R_{\mathrm{MT}_{0}},U_{\mathrm{MT}_{0}} \right)\mathbf{1}\left( \mathcal{O}_{\mathrm{MT}_{i}}^{(j,k)} \right)\mathbf{1}\left( \mathcal{Z}_{\mathrm{MT}_{i}} \right) \Bigg)|\mathcal{X}_{\mathrm{MT}_{i}}^{(k)}\Bigg] 
\end{align}
\else
\begin{align}
\mathcal{L}_{I} & \left( s|\mathcal{X}_{\mathrm{MT}_{0}}^{(j)} \right)=\mathbb{E}_{I}\left[ \mathrm{e}^{-sI}|\mathcal{X}_{\mathrm{MT}_{0}}^{(j)} \right]
=\prod\limits_{k\in \mathcal{K}}{\mathbb{E}_{\Psi ^{(k)}}} 
\nonumber \\ 
& \prod\limits_{\mathrm{MT}_{i}\in \Psi ^{(k)}}{\mathbb{E}_{R_{\mathrm{MT}_{i}},U_{\mathrm{MT}_{i}}}}
\Bigg[\mathbb{E}_{H_{\mathrm{MT}_{i}}}\exp \Bigg(-sH_{\mathrm{MT}_{i}}\left( \tau D_{\mathrm{MT}_{i}} \right)^{-\alpha } 
\nonumber \\ 
& p_{\mathrm{MT}}\left( R_{\mathrm{MT}_{0}},U_{\mathrm{MT}_{0}} \right)\mathbf{1}\left( \mathcal{O}_{\mathrm{MT}_{i}}^{(j,k)} \right)\mathbf{1}\left( \mathcal{Z}_{\mathrm{MT}_{i}} \right) \Bigg)|\mathcal{X}_{\mathrm{MT}_{i}}^{(k)}\Bigg] 
\end{align}
\fi

Applying the PGFL, performing expectation over the fading and conditioning over the event $\mathcal{Q}_{\mathrm{MT}_{i}}^{(n)}$ yields 

\ifOneColumn
\begin{align}
\mathcal{L}_{I} & \left( s|\mathcal{X}_{\mathrm{MT}_{0}}^{(j)} \right) =
 \exp \Bigg(-\sum\limits_{k\in \mathcal{K}}{2\pi \lambda ^{(k)}}\sum\limits_{n\in \mathcal{K}}{\Pr \left( \mathcal{Q}_{\mathrm{MT}_{i}}^{(n)}|\mathcal{X}_{\mathrm{MT}_{i}}^{(k)} \right)\times } 
 \int\limits_{r=0}^{\infty }{\int\limits_{u=\left( \frac{t^{(n)}}{t^{(k)}} \right)^{\frac{1}{\alpha }}r}^{\infty }{f_{R_{\mathrm{MT}_{i}},U_{\mathrm{MT}_{i}}}\left( r,u|\mathcal{X}_{\mathrm{MT}_{i}}^{(k,n)} \right)}} 
\nonumber \\ 
& \int\limits_{\rho = 
\mathrm{max} \left( \left( \frac{t^{(j)}}{t^{(k)}} \right)^{\frac{1}{\alpha }}r,\frac{1}{\tau }\left( \frac{p_{\mathrm{MT}}\left( r,u \right)}{i_{0}} \right)^{\frac{1}{\alpha }} \right)
}
^{\infty }\frac{s\left( \tau \rho  \right)^{-\alpha }p_{\mathrm{MT}}\left( r,u \right)
\rho \mathrm{d}\rho \mathrm{d}r}
{1+s\left( \tau \rho  \right)^{-\alpha }p_{\mathrm{MT}}\left( r,u \right)} \Bigg)
\end{align}
\else
\begin{align}
\mathcal{L}_{I} & \left( s|\mathcal{X}_{\mathrm{MT}_{0}}^{(j)} \right) =
 \exp \Bigg(-\sum\limits_{k\in \mathcal{K}}{2\pi \lambda ^{(k)}}\sum\limits_{n\in \mathcal{K}}{\Pr \left( \mathcal{Q}_{\mathrm{MT}_{i}}^{(n)}|\mathcal{X}_{\mathrm{MT}_{i}}^{(k)} \right)\times } 
\nonumber \\ 
& \int\limits_{r=0}^{\infty }{\int\limits_{u=\left( \frac{t^{(n)}}{t^{(k)}} \right)^{\frac{1}{\alpha }}r}^{\infty }{f_{R_{\mathrm{MT}_{i}},U_{\mathrm{MT}_{i}}}\left( r,u|\mathcal{X}_{\mathrm{MT}_{i}}^{(k,n)} \right)}} 
\nonumber \\ 
& \int\limits_{\rho = 
\mathrm{max} \left( \left( \frac{t^{(j)}}{t^{(k)}} \right)^{\frac{1}{\alpha }}r,\frac{1}{\tau }\left( \frac{p_{\mathrm{MT}}\left( r,u \right)}{i_{0}} \right)^{\frac{1}{\alpha }} \right)
}
^{\infty }\frac{s\left( \tau \rho  \right)^{-\alpha }p_{\mathrm{MT}}\left( r,u \right)
\rho \mathrm{d}\rho \mathrm{d}r}
{1+s\left( \tau \rho  \right)^{-\alpha }p_{\mathrm{MT}}\left( r,u \right)} \Bigg)
\end{align}
\fi

Finally integrating the inner integral over $\rho$ completes the proof.

\section*{Acknowledgment}
This work has been partly supported by the Spanish Government under grant TEC2013-44442-P,  the University of Malaga and by the European Commission under the auspices of the H2020-MSCA-ITN-2014 5Gwireless project (grant 641985).

\ifCLASSOPTIONcaptionsoff
  \newpage
\fi


\bibliographystyle{IEEEtran}
\bibliography{UL_Power_Control}

%








\end{document}